\documentclass[sn-mathphys-num]{sn-jnl}
\usepackage{multirow}
\usepackage{amsmath,amssymb,amsfonts}
\usepackage{amsthm}
\usepackage{mathrsfs}
\usepackage[title]{appendix}
\usepackage{xcolor}
\usepackage{textcomp}
\usepackage{enumitem}
\usepackage{manyfoot}
\usepackage{bm}
\usepackage{listings}%
\theoremstyle{plain}
\usepackage{algorithm}
\usepackage{algorithmic}
\usepackage{array}
\usepackage[algo2e]{algorithm2e}
\usepackage[caption=false,font=normalsize,labelfont=sf,textfont=sf]{subfig}
\usepackage{booktabs}
\usepackage{cite}
\usepackage{multirow}
\usepackage{graphicx} 
\usepackage{hyperref}
\usepackage{stfloats}
\usepackage{url}
\usepackage{verbatim}
\usepackage{hyperref}
\theoremstyle{thmstyleone}%
\newtheorem{theorem}{Theorem}
\newtheorem{proposition}{Proposition}
\newtheorem{assumption}{Assumption}
\newtheorem{corollary}{Corollary}
\theoremstyle{thmstyletwo}%

\newtheorem{remark}{Remark}%
\theoremstyle{thmstylethree}%
\usepackage[title]{appendix}
\usepackage{caption}
\usepackage{cite}
\usepackage{subcaption}

\newcommand{\V}{\mathbf{V}}
\newcommand{\X}{\mathbf{X}}
\newcommand{\A}{\mathbf{A}}

\newcommand{\I}{\mathbf{I}}
\newcommand{\xt}{\tilde{\mathbf{X}}}
\newcommand{\Vk}{\mathbf{V}_{K^*}}
\newcommand{\calS}{\mathcal{S}}

\newcommand{\bbE}{\mathbb{E}}
\newcommand{\bbR}{\mathbb{R}}
\newcommand{\OVS}{\mathrm{OVS}}

\newcommand{\abs}[1]{\left|#1\right|}
\DeclareMathOperator*{\argmax}{arg\,max}

\DeclareMathOperator{\sign}{sign}

\begin{document}
\title{CaSPECT: Discovering Causally Homogeneous Subgroups via Directed Spectral Clustering}

\author[1]{\fnm{Arghya} \sur{Pratihar}}\email{arghyapratihar24@gmail.com}

\author[2]{\fnm{Shinjon} \sur{Chakraborty}}\email{shinjonchakraborty07@gmail.com}
\author*[1]{\fnm{Swagatam} \sur{Das}}\email{swagatam.das@isical.ac.in}

\affil*[1]{\orgdiv{Electronics and Communication Sciences Unit}, \orgname{Indian Statistical Institute}, \orgaddress{\country{India}}}

\affil[2]{\orgname{Indian Statistical Institute}, \orgaddress{\country{India}}}

\keywords{Causal Inference, Spectral Clustering, Directed Acyclic Graph, Chung Laplacian, Conditional Average Causal Effect.}

\abstract{
We propose \textbf{CaSPECT}, a causal spectral clustering framework for discovering causally homogeneous subgroups from observational data. Rather than clustering in covariate space, CaSPECT defines similarity through the topology of a learned directed acyclic graph (DAG); a bootstrap-stabilised PC algorithm recovers the causal skeleton; a novel \emph{Orientation Validation Score} (OVS) combines PC bootstrap evidence with DirectLiNGAM to orient edges robustly; directed edges are weighted by backdoor-identified average treatment effects estimated via OLS or double machine learning. Chung's directed Laplacian provides a spectral embedding in which individuals close together share the same causal propagation pathways. We establish almost-sure consistency of the full pipeline and validate the method through a controlled simulation study and on LaLonde CPS1, IHDP, and 401(k) datasets, where CaSPECT recovers a positive and statistically significant treatment effect within the causally comparable subpopulation and corrects for severe confounding without requiring a pre-specified propensity score model.}

\maketitle

\section{Introduction}\label{intro}
Statistical Causal Inference aims to estimate how an outcome behaves in the presence of interventions on a causal variable. There are a few well-established estimates in this domain for quantifying causality, such as the Average Causal Effect (ACE), Conditional Average Causal Effect (CACE), Individual Causal Effect (ICE), etc. Among these, the average causal effect for a binary treatment $\{0,1\} \in A$ is defined as $E[Y_i(1)-Y_i(0)],$ where $Y_i(z)$ denotes the potential outcome under the treatment level $Z_i=z$. For observational data, much progress has been made in the causal literature in obtaining a consistent and robust estimate of the average causal effect, but it is well recognized that ACE may obscure the estimation of treatment effect heterogeneity. This heterogeneity, i.e., variability in treatment effects across individuals or subpopulations, is particularly relevant in domains such as healthcare, policy evaluation, etc., where interventions may produce beneficial results in some subgroups while harming others.       

One of the possible ways that has been given considerable attention to address this heterogeneity is the conditional average causal effect (CACE), defined as 
$
    \psi(X)=E[Y_i(1) -Y_i(0)|X],
$ 
where $X$ denotes the vector of observed covariates. The CACE facilitates personalized effect estimation and has motivated numerous methodological advances, particularly those including machine learning. Most of the interest while addressing the heterogeneity problem in the past has been directed towards supervised approaches. In contrast, recently, some limelight has been directed towards addressing the problem through unsupervised approaches like Clustering. So the objective of causal clustering is to uncover latent subgroup structures by clustering individuals based on estimated causal effect functions. This descriptive and data-driven perspective offers a complementary direction to existing approaches and fills a critical gap in the literature on causal inference. An instance of this causal clustering algorithm is addressed by \citep{hierarchical}, \citep{causalkmeans}, where they mainly focused on clustering with $k$-means, density-based clustering, and hierarchical clustering. The methodological challenge in this setting stems from the fact that the object being clustered, i.e., the causal effects, is not directly observed but estimated. This distinguishes the problem from the classical clustering on observed features. In light of the largely unexplored interface between spectral clustering and causal inference, we have proposed Causal Spectral Clustering (CaSPECT). This approach seeks to address the heterogeneity of causal effects, considering the geometry of the data through a similarity graph constructed over estimated treatment effect functions. The motivation for this approach stems from the fact that causal systems exhibit an inherent directed acyclic graph (DAG) structure according to the probabilistic graphical model framework, which implies intrinsic relationships among the variables that can be captured through spectral embeddings.

\section{Related methods}
Recent advances involving causal inference and unsupervised learning have introduced the notion of causal clustering, where the goal is to partition units according to heterogeneous treatment effects rather than observed covariate similarity. Two recent contributions in this direction are causal $k$-means clustering \citep{causalkmeans}, hierarchical and density-based causal clustering \citep{hierarchical}. Causal $k$-means clustering reformulates the classical $k$-means objective in terms of latent counterfactual response functions. Instead of clustering on observed features, units are grouped based on similarity in their individual treatment effects. Since these effects are not directly observed, the method relies on plug-in and doubly robust estimators. The authors establish consistency and asymptotic normality of the resulting clustering procedure, highlighting the role of semiparametric efficiency and nuisance function estimation. Extending this idea, \citep{hierarchical} generalizes causal clustering to hierarchical and density-based methods. This allows clusters to be defined through connectivity or density in the space of estimated causal effects, rather than relying on centroid-based partitions. Such flexibility is important when treatment effect heterogeneity exhibits a nonconvex or manifold structure. Theoretical guarantees are derived showing that the statistical performance depends critically on the estimation error of nuisance components, such as outcome regressions and propensity scores.

These approaches differ fundamentally from classical clustering techniques such as spectral clustering \citep{spectral}, which operate on similarity graphs constructed from observed data. Standard spectral methods involve the eigenstructure of the graph Laplacian to identify clusters with strong ingroup similarity, but are agnostic to causal structure. As a result, they may fail to recover groups with homogeneous treatment effects when such a structure is not aligned with covariate similarity. The recent causal clustering literature, therefore, motivates the use of causally informed similarity measures, where affinities are constructed using estimated treatment effects or pseudo-outcomes. A causal spectral clustering framework naturally combines the flexibility of graph-based methods with semiparametric causal estimation. Compared to $k$-means \citep{kmeans} based approaches, spectral methods can capture nonlinear and graph-structured heterogeneity while accommodating doubly robust estimation of nuisance functions.

In summary, existing work highlights three key insights: (i) clustering should be performed on counterfactual quantities, (ii) statistical efficiency requires careful handling of nuisance estimation, and (iii) flexible clustering paradigms are needed to capture complex heterogeneity. These ideas directly motivate the development of causal spectral clustering.

\section{Our Contributions}
The proposed Causal Spectral Clustering (CaSPECT) framework changes the clustering approach from simple feature similarity to causal topology. Instead of grouping individuals by observed covariates or differences in treatment effect, CaSPECT reveals the deeper structural and causal relationships that drive the data-generating process. The framework starts by recovering a Directed Acyclic Graph (DAG) using the PC algorithm \citep{pc_algo}. It then assigns weights to the edges through various estimation methods and creates a transition operator to model how causal influence spreads. By performing spectral decomposition on a symmetrized Laplacian operator, our method produces a low-dimensional representation that captures the overall shape of this causal diffusion landscape. This method groups individuals based on their active pathways rather than their proximity in Euclidean space. This structural approach allows CaSPECT to uncover latent causal regimes that traditional methods frequently overlook. Unlike mixture Structural Equation Models (mixture SEMs), which fit specific models based on groups before clustering on parameters, our method reverses this sequence in turn, allowing interpretation of the graph architecture to cluster individuals without priorly specifying parametric forms. Furthermore, because it does not cluster in covariate space, CaSPECT successfully distinguishes overlapping subpopulations where Conditional Average Causal Effect (CACE) methods and mixture SEMs fail due to marginal distribution overlap. It doesn't ask what values variables take, but how influence flows, providing a principled framework for mapping specific treatment responses for particular subpopulations, directly to localized structural interventions. The primary trade-off of this methodology is its dependency on the accuracy of the initial DAG and the assumption of causal sufficiency inherent to the standard PC algorithm. While stability weighting based on bootstrapping can downweight inconsistent edges and mitigate minor uncertainty, a fundamentally misspecified graph will propagate errors throughout the entire spectral embedding. Consequently, CaSPECT trades graph-independent robustness for unparalleled resolution of structural heterogeneity and targeted policy insights. So, the higher the accuracy of the graph formation mechanism, the better this algorithm performs.

\section{Notations \& Preliminaries} \label{sec:prelims}
The observed dataset is arranged as a matrix
$\X \in \bbR^{n \times q}$, where $n$ is the number of observations and $q$
is the number of variables. The variable set
$\V = \{X_1, \ldots, X_q, Z, Y\}$ consists of background covariates, a
treatment variable $Z$, and an outcome $Y$.
 
Causal relationships among the variables in $\V$ are encoded in a directed acyclic graph. We write $G_0$ for the true DAG that generated the data and $E_0$ for its skeleton, the underlying undirected edge set. A directed edge from $u$ to $v$ is written $u \to v$, and the backdoor adjustment set required to identify the causal effect of $u$ on $v$ is denoted $\X_{uv}$. The average causal effect along that edge is $\tau_{u \to v}$, and the corresponding structural coefficient is $\beta_{uv}$. Once the causal graph is estimated and any ambiguous edges are resolved via contraction (sec \ref{sec:method}, step 1), the resulting contracted variable set ${\bf V}^*$. The directed edges among $\V^*$ are collected into an adjacency matrix $\A \in \bbR^{|{\bf V}|^* \times |{\bf V}^*|}$, with entry $\A_{uv}$ denoting the estimated causal effect magnitude for $u \to v$. From $\A$ we construct a row-stochastic transition matrix $P$, whose stationary distribution is $\pi$. These feed into the Chung Laplacian $L$, whose eigenvalues $\lambda_1 \leq \lambda_2 \leq \cdots$ and eigenvectors $\mathbf{v}_1, \mathbf{v}_2, \ldots$ give the spectral embedding. The embedding of the full dataset is $\xt \in \bbR^{n \times K^*}$, where $K^*$ is the chosen embedding dimension, and $\tilde{\mathbf{x}}_i$ denotes the embedding of individual observation $i$. We work within the potential outcomes framework for causal inference \citep{Neyman:1923, Rubin:1974} and assume a sample of $n$ units to be randomly selected from a population. For each unit $i = 1, \ldots, n$, $Z_i$ is an assignment indicator to treatment with $Z_i = 1$ if assigned to treatment and $ Z_i = 0$ otherwise. Let $X_i \in \mathbb{R}^q$ be a vector of observed covariates, and $Y_i(Z_i)$ is the observed outcome variable depending on the assignment of the treatment variables. Let $\{Y_i(1), Y_i(0)\}$ be the potential outcomes when treated and not treated, respectively, with only one of them observed in the sample, also referred to as counterfactual outcomes.

\section{Assumptions}
\label{sec:assumptions}
The following assumptions underpin the CaSPECT pipeline and are grouped according to the stage at which they are invoked.

\begin{assumption}[Faithfulness]\label{ass:B1}
The joint distribution $P(\V)$ is faithful to the true DAG $G_0$. Every conditional independence in $P(\V)$ corresponds to a $d$-separation in $G_0$, and every $d$-separation in $G_0$ corresponds to a conditional independence in $P(\V)$ \citep{spirtes2000causation}.
\end{assumption}

\begin{assumption}[Causal Sufficiency]\label{ass:B2}
The observed variable set $\mathbf{V}$ is assumed to be causally sufficient; that is, there exists no unobserved variable that acts as a common cause of any pair of variables in $\mathbf{V}$. This assumption is required for the consistency guarantees of the PC algorithm \citep{pc_algo}.  
\end{assumption}
 
\begin{assumption}[Non Gaussianity]\label{ass:B3}
At least half the variables in $\V$ have non-Gaussian error distributions \citep{shimizu2006linear}. When this fails, LiNGAM's weight in the OVS is scaled down proportionally (see Section~\ref{sec:dag}).
\end{assumption}
 
\begin{assumption}[Per edge Linearity or Identifiable Nonlinearity]
\label{ass:B4}
Each edge $(u \to v)$ satisfies either the linear SEM
$v = \beta_{uv}u + \bm{\beta}_X^\top\X_{uv} + \varepsilon_v$, confirmed
by a RESET test, or the partially linear model of
\citep{chernozhukov2018double}. Prior CaSPECT formulations assume global
linearity; we relax this to the edge level.
\end{assumption}
 
\begin{assumption}[Causal Identification]\label{ass:B5}
For every edge $(u \to v)$, consistency, conditional ignorability given
the backdoor set $\X_{uv}$, and positivity hold
\citep{pearl,Neyman:1923,Rubin:1974}. These three assumptions, viz. consistency, ignorability, and positivity, clubbed together as causal identification, are the standard identification requirements for any ACE estimate based on backdoor adjustment.
\end{assumption}

Assumptions~\ref{ass:B1}--\ref{ass:B3} are inherited from the component algorithms, the PC algorithm, and LiNGAM and are well understood in the causal discovery literature. Assumptions~\ref{ass:B4}--\ref{ass:B5} are specific to our edge weight estimation step. 

\section{Proposed Method}
\label{sec:method}
We propose CaSPECT, a causal spectral clustering framework that takes observational data  $\mathbf{X} \in \mathbb{R}^{n \times q}$ over variables $\mathbf{V} = \{X_1, \ldots, X_q, Z, Y\}$ and produces cluster assignments grounded in the underlying causal structure through four sequential stages: DAG formation, causal edge weight estimation, Chung Laplacian construction, spectral embedding and clustering. We discuss each step of our proposed method, CaSPECT, in detail. A concise summary of our algorithm is presented in Algorithm \ref{alg:Caspect}.

\vspace{3pt}
\noindent
\textbf{Step 1: DAG Formation.}
\label{sec:dag}
We construct a fully oriented DAG $G^* = (\V^*, E^*)$ in three stages. In the first stage, we use the PC Algorithm with bootstrap filtering to obtain a stable causal skeleton. Then we introduce the OVS to resolve orientation ambiguities left by PC alone. Finally, we handle any edges that remain undirected after the OVS through some hierarchical sequence of procedures.
 
We apply the PC Algorithm \citep{pc_algo,spirtes2000causation} at
significance level $\alpha_{\mathrm{CI}}$ to obtain a CPDAG
$G_{\mathrm{PC}}$ over $\V$. Orientation of v-structures and subsequent propagation follows Meek's rules \citep{meek2013causal}. Under Assumptions~\ref{ass:B1} and~\ref{ass:B2}, the algorithm recovers the true CPDAG as $n \to \infty$ \citep{spirtes2000causation}. From computational point of view, running PC on a single dataset conflates genuine causal edges with sampling artifacts. To fix this ambiguity, we re run PC on $B$ bootstrap resamples and record, for each edge $(u,v)$, the inclusion frequency $f_{uv}$ and the orientation frequency $g_{uv}$ (fraction of resamples in which $u \to v$ is returned). Only edges with $f_{uv} \geq \theta$ enter the stable skeleton $\calS$, which bounds $\bbE[\mathrm{False Positives(FP)}] \leq \frac{q^2}{2B}$ by stability selection \citep{meinshausen2010stability}. The ratio
$\rho_{uv} = g_{uv}/f_{uv}$ carries per-edge orientation confidence forward into the OVS.
 
 
The PC Algorithm cannot orient all edges from observational data because Markov-equivalent DAGs share the same conditional independence structure. LiNGAM \citep{shimizu2006linear,shimizu2011directlingam} exploits a different property of the data, non-Gaussianity of the error terms, to resolve this ambiguity. Because the two methods rest on independent assumptions, disagreement between them is a genuine signal rather than redundancy. Before using LiNGAM evidence, we apply a per-variable Jarque-Bera test \citep{jarque} to OLS residuals. The LiNGAM weight is set to $w_L =
(n_{\mathrm{NG}}/q) \cdot w_{max}$, where $n_{\mathrm{NG}}$ is the number of variables whose residuals reject Gaussianity at a pre specified significance level $\alpha_{s}$, $q$ is the total number of variables and $w_{max} \in (0, 0.5)$ is a user defined hyperparameter bounding the maximum permissible influence of the LiNGAM term. 
The OVS for each edge in $\calS$ is:
\begin{equation}\label{eq:ovs}
   \OVS_{uv}
   = w_{\mathrm{PC}} \cdot f_{uv}
     \cdot \underbrace{(\rho_{uv} - \rho_{vu})}_{\delta_{uv}^{\mathrm{PC}}}
   + w_L \cdot
     \underbrace{\sign(\hat{B}_{uv} - \hat{B}_{vu})}_{\delta_{uv}^{L}},
\end{equation}
where $w_{\mathrm{PC}} + w_L = 1$. The term $\delta_{uv}^{\mathrm{PC}}
\in [-1,+1]$ encodes net bootstrap orientation evidence, weighted by
$f_{uv}$ to reflect edge stability. \begin{equation*}
    \delta_{uv}^{L}
    \;=\;
    \sign(\hat{B}_{uv} - \hat{B}_{vu})
    \;\in\; \{-1,\, 0,\, +1\},
\end{equation*} 
where $+1$ indicates that LiNGAM prefers $u \to v$, $-1$ indicates
preference for $v \to u$, and $0$ indicates exact indifference which is
a measure-zero event under continuous distributions, and it rarely occurs in
practice; $\hat{B}_{uv}$ is the LiNGAM estimated coefficient from u to v, it represents how strongly LiNGAM believes u causally drives v in a linear non-Gaussian structural equation model. Let $\gamma \in (0, 1)$ be the edge orientation threshold. An edge is oriented $u \to v$ if $\OVS_{uv} > \gamma$, as $v \to u$ if
$\OVS_{uv} < -\gamma$, and passed to the next step if $\abs{\OVS_{uv}} \leq
\gamma$ (we have considered $\gamma = 0.15$ as default during the computations).
 
\begin{proposition}[OVS Consistency]\label{prop:ovs}
 Under Assumptions \ref{ass:B1}- \ref{ass:B3}, if $w_L > \gamma$, then $\text{OVS}_{uv}$ asymptotically strictly bounds away from $[-\gamma, \gamma]$ with the correct sign for all true edges as $n \to \infty$. Consequently, the Orientation Validity Score consistently orients the true DAG in large samples.
\end{proposition}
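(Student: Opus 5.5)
The plan is to split $\OVS_{uv}$ into its two summands and control each separately as $n\to\infty$. Take a true edge $u\to v$ of $G_0$; write $s_{uv}=+1$ for it, so the target is $\OVS_{uv}\,s_{uv}>\gamma$ with probability tending to one. The hypothesis $w_L>\gamma$ suggests the following strategy.
\begin{itemize}
\item Show that the LiNGAM term alone already exceeds the threshold in the correct direction.
\item Show that the PC term never works against it in the limit.
\end{itemize}

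\textbf{Step 1 (LiNGAM term).} Under Assumption~\ref{ass:B3}, invoke the consistency of DirectLiNGAM \citep{shimizu2011directlingam}: $\hat B\to B_0$ in probability. For a true edge $u\to v$ we have $B_{0,uv}\neq 0$ (faithfulness, Assumption~\ref{ass:B1}, rules out a vanishing coefficient) and $B_{0,vu}=0$ by acyclicity.

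One must read $\hat B_{uv}$ as a magnitude, or orient the sign convention so that $\hat B_{uv}-\hat B_{vu}$ has the sign of the direction. Then $\delta^L_{uv}\to +1$ with probability one in the limit.

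The Jarque--Bera tests are consistent, so $n_{\mathrm{NG}}$ converges to the true number of non-Gaussian variables. Hence $w_L$ converges to a deterministic limit, which we assume exceeds $\gamma$ as the statement requires.

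\textbf{Step 2 (PC term).} Under Assumptions~\ref{ass:B1}--\ref{ass:B2}, PC recovers the true CPDAG as $n\to\infty$ \citep{spirtes2000causation}. The same holds for every bootstrap resample, since each is an i.i.d.\ sample of size $n$ from a distribution converging to $P(\V)$. Therefore $f_{uv}\to 1$ for every true edge. There are two cases.
\begin{itemize}
\item \emph{Compelled edges.} If $u\to v$ is compelled in the CPDAG, then $g_{uv}\to 1$ and $g_{vu}\to 0$. So $\rho_{uv}-\rho_{vu}\to 1$ and the PC term tends to $w_{\mathrm{PC}}\ge 0$, with the correct sign.
\item \emph{Reversible edges.} If $u\to v$ is reversible, PC returns it undirected, so $g_{uv},g_{vu}\to 0$ and $\delta^{\mathrm{PC}}_{uv}\to 0$.
\end{itemize}
In both cases $\liminf\, w_{\mathrm{PC}} f_{uv}\delta^{\mathrm{PC}}_{uv}\ge 0$.

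Combining the two steps gives $\OVS_{uv}\to c\ge w_L>\gamma$. A union bound over the finitely many edges, at most $\binom{q+2}{2}$ of them, yields simultaneous correct orientation of all true edges. Thus $\OVS$ consistently orients $G_0$.

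\textbf{Main obstacle.} I expect the bootstrap step in Step 2 to be the hardest part. PC consistency is stated for fixed-distribution samples, whereas bootstrap resamples are drawn from the empirical distribution and contain ties. I would try to argue via uniform consistency of the partial-correlation CI tests: the bootstrap partial correlations converge to the empirical ones at rate $n^{-1/2}$. With $\alpha_{\mathrm{CI}}\to 0$ suitably slowly, every CI decision in every resample agrees with the population decision with probability tending to one. If this cannot be made rigorous, a fallback is available. Step 2 only needs $f_{uv}\delta^{\mathrm{PC}}_{uv}\ge -\epsilon$ eventually, and the trivial bound $|\delta^{\mathrm{PC}}|\le 1$ is not enough for that. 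So I would at least need that the reverse orientation $v\to u$ is returned with vanishing frequency, which again reduces to the consistency of the v-structure tests in the resamples.
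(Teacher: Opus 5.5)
Your proof is correct and follows the paper's own route: you split true edges into compelled and reversible (Markov-equivalence) edges, use $f_{uv}\to 1$, let the LiNGAM term with $w_L>\gamma$ carry the threshold, and show the PC term tends to $w_{\mathrm{PC}}$ or vanishes. Where you deviate you are sharper than the paper: obtaining $\delta^{\mathrm{PC}}_{uv}\to 0$ exactly for reversible edges repairs the paper's claim that $c=0$ is the worst case (a limit $c$ of the wrong sign could push the score into $[-\gamma,\gamma]$); reading $\hat B$ in magnitude fixes the fact that the signed definition gives $\delta^L_{uv}\to\sign(\beta_{uv})$ and so misorients negative edges; and the bootstrap-PC step you flag is merely asserted in the paper, although both arguments share the unexamined use of DirectLiNGAM consistency, which needs all disturbances non-Gaussian rather than the half allowed by Assumption~\ref{ass:B3}.
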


\begin{proof}
    The detailed proof is given in Appendix \ref{prop:proof}.
\end{proof}

\begin{algorithm}[t]
\caption{CaSPECT Algorithm}
\label{alg:Caspect}
\KwIn{$\X \in \mathbb{R}^{n \times q}$, Z, Y, $\alpha$, $B$, $\theta$, $\gamma$}
\KwOut{Clusters $\mathcal{C}$}

\textbf{DAG Formation:} \\
Run PC Algorithm $\Rightarrow G_{\mathrm{PC}}$; \\
Bootstrap $B$ times $\Rightarrow (f_{uv}, g_{uv})$; $\mathcal{S}=\{f_{uv}\ge\theta\}$; \\
Run LiNGAM and set weight $w_L$; \\
Compute
\[
\OVS_{uv}=w_{\mathrm{PC}} f_{uv}(\rho_{uv}-\rho_{vu}) + w_L\,\mathrm{sign}(\hat{B}_{uv}-\hat{B}_{vu});
\]
Orient via $\gamma$; resolve remaining edges (acyclicity, propagation rules, contraction); \\
Verify DAG via acyclicity constraint.

\textbf{Edge Weights:} \\
$A_{uv} \leftarrow |\hat{\beta}_{uv}|$ OLS or DML estimate; \\
$A^{\mathrm{stab}}_{uv} \leftarrow f_{uv} A_{uv}$.

\textbf{Laplacian:} \\
Construct $P$, $L=I-\tfrac{1}{2}(P+P^*)$; \\
$K^* \leftarrow \arg\max_j(\lambda_{j+1}-\lambda_j)$.

\textbf{Clustering:} \\
$\tilde{\X}=\X^*\Vk$; apply $k$-means clustering.

\Return $\mathcal{C}$
\end{algorithm}
The edges with $|\OVS_{uv}| \leq \gamma$ are resolved by the following hierarchy. Acyclicity is checked after each step.
 
\begin{enumerate}
  \item \textbf{Acyclicity filter.} If one orientation would create a directed cycle and the other would not, the cycle-free orientation is assigned.
 
  \item \textbf{Meek rules} \citep{meek2013causal}\textbf{.} All four propagation rules are applied exhaustively to the partial DAG.
 
  \item \textbf{Domain or temporal ordering.} Where prior knowledge
  establishes that $u$ precedes $v$, orient $u \to v$.
 
  \item \textbf{Edge contraction.} Any edge still undirected after
   Steps~1--3 is handled by merging $u$ and $v$ into a single node
   $N_{uv}$ with averaged incident weights: $A_{N \to w} = (A_{uw} +
   A_{vw})/2$ and $A_{w \to N} = (A_{wu} + A_{wv})/2$.
\end{enumerate}
 
\begin{remark} \label{rem2}
Contraction is preferable to the common alternative of symmetrizing
($A_{uv} \leftarrow A_{vu}$), which creates the cycle $u \to v \to u$.
Such a cycle renders the backdoor criterion \citep{pearl} inapplicable, prevents ACE identification, and invalidates the causal reading of the spectral embedding.
\end{remark}
 
\noindent
Global acyclicity of $G^*$ is confirmed via the NOTEARS condition \citep{zheng2018dags}. If any cycle persists, the edge in that cycle with the lowest $|\OVS|$ value is contracted.

\vspace{5pt}
\noindent
\textbf{Step 2: Causal Edge Weight Estimation.}
\label{sec:weights}

Each directed edge $(u \to v) \in E^*$ is weighted by the magnitude of ACE of $u$ on $v$, identified via the backdoor adjustment set $\X_{uv}$ read off from $G^*$ \citep{pearl}. Then we perform a RESET test on each edge to select the estimation approach. The two approaches are explained below.

\noindent 
\textbf{OLS for Linear Edges.}
Where the RESET test does not reject linearity ($p > 0.05$), OLS of $v$ on $u$ and $\X_{uv}$ gives a consistent estimate of $\tau_{u \to v} =
\beta_{uv}$ under Assumptions~\ref{ass:B4} and~\ref{ass:B5}, so we set $A_{uv} = |\hat{\beta}_{uv}|$.
 
\noindent 
\textbf{Double Machine Learning for Nonlinear Edges.}
Where linearity is rejected, we use DML \citep{chernozhukov2018double} with $D=5$-fold cross-fitting. Flexible nuisance estimators (GAM or Random Forest) \citep{gam, random} partial out the confounding effect of $\X_{uv}$ on both $u$ and $v$, providing cross-fitted residuals $\tilde{U}_i$ and
$\tilde{V}_i$. The ACE estimate is:
\begin{equation}\label{eq:dml}
   \hat{\tau}_{\mathrm{DML}}
   = \frac{\sum_i \tilde{U}_i\tilde{V}_i}{\sum_i \tilde{U}_i^2},
   \qquad A_{uv} = \abs{\hat{\tau}_{\mathrm{DML}}}.
\end{equation}
This estimator is $\sqrt{n}$-consistent and asymptotically normal
regardless of the rate at which the nuisance functions are estimated.

The OLS and DML estimators provide estimates of the causal effect magnitude $A_{uv}$ for each directed edge $u \to v$. However, these estimates do not account for uncertainty in the recovered graph structure $G^*$. To incorporate structural uncertainty, we weight each estimated effect by its bootstrap inclusion frequency. Specifically, the final adjacency entry is defined as
$A^{\mathrm{stab}}_{uv}=f_{uv}A_{uv},$ where $f_{uv}\in[0,1]$ denotes the proportion of bootstrap samples in which the edge $u\to v$ appears. This weighting scheme attenuates the contribution of edges that are estimated inconsistently across bootstrap replications while preserving the effect magnitudes of highly stable edges. Consequently, the resulting adjacency matrix
$\mathbf{A}^{\mathrm{stab}}= \bigl(A^{\mathrm{stab}}_{uv}\bigr)$
encodes both causal effect strength and structural confidence. Rather than imposing a hard threshold and discarding uncertain edges entirely, bootstrap stability weighting provides a continuous adjustment that propagates graph uncertainty into subsequent stages of the algorithm.

\begin{theorem}
\label{thm:ate}
Let $G^* = (\mathbf{V}^*, E^*)$ be the fully oriented DAG produced by the DAG formation stage of CaSPECT, and let $(u \to v) \in E^*$ be a directed edge with backdoor adjustment set $\mathbf{X}_{uv}$ identified from $G^*$. Assume consistency, conditional ignorability given $\mathbf{X}_{uv}$, and positivity hold for the intervention on $u$. Then the bootstrap stability weighted adjacency entry
$
A_{uv}^{\mathrm{stab}} = f_{uv} \cdot A_{uv}
$
satisfies
$
A_{uv}^{\mathrm{stab}} \xrightarrow{a.s.} |\tau_{u \to v}^0|
$
as $n \to \infty$, where $\tau_{u \to v}^0$ is the true average causal effect (ACE) of $u$ on $v$.
\end{theorem}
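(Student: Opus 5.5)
The plan is to split $A_{uv}^{\mathrm{stab}} = f_{uv}\cdot A_{uv}$ into its two factors and show each converges almost surely: $f_{uv} \xrightarrow{a.s.} 1$ for every true edge, and $A_{uv} \xrightarrow{a.s.} |\tau^0_{u\to v}|$. The product then converges by the continuous mapping theorem, since $(a,b)\mapsto ab$ is continuous and a.s. limits are preserved under continuous maps.

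\textbf{Step 1: the stability factor.} Under Assumptions~\ref{ass:B1} and~\ref{ass:B2}, the PC algorithm recovers the true CPDAG as $n\to\infty$. In particular, the probability that a single run on a sample of size $n$ returns the wrong skeleton tends to zero. To get almost-sure (not only in-probability) control, I would use the standard exponential tail bounds for Fisher-$z$ conditional independence tests with fixed $q$: the error probability of each test is $O(e^{-cn})$. This probability is summable in $n$, so Borel--Cantelli gives that, eventually, PC makes no skeleton error on the original sample.

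For the bootstrap resamples, I would apply the same bound conditionally on the data, using a.s. convergence of the empirical bootstrap moments to the population moments. This shows that the proportion of the $B$ resamples containing $(u,v)$ converges to $1$ a.s. (with $B=B_n\to\infty$, or with $B$ fixed and all resamples eventually correct). Hence $f_{uv}\to 1$. Proposition~\ref{prop:ovs}, together with the contraction and acyclicity steps, ensures that the oriented edge $u\to v$ in $G^*$ eventually coincides with the true $G_0$ edge. The backdoor set $\X_{uv}$ read from $G^*$ is then a valid adjustment set.

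\textbf{Step 2: the effect magnitude.} On the event that $G^*=G_0$ eventually, the adjustment set is fixed, so the RESET decision can be treated per edge. In the linear branch, under Assumptions~\ref{ass:B4} and~\ref{ass:B5}, the OLS coefficient is a continuous function of sample second moments of $(v,u,\X_{uv})$. The strong law of large numbers together with positivity (which gives a nonsingular Gram matrix) yields $\hat\beta_{uv}\xrightarrow{a.s.}\beta_{uv}=\tau^0_{u\to v}$. In the DML branch, $\hat\tau_{\mathrm{DML}}$ is a ratio of averages of cross-fitted residual products. I would write $\tilde U_i = U_i - (\hat m - m)(\X_i)$ and bound the cross terms via Cauchy--Schwarz by nuisance $L_2$ errors. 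Assuming these errors converge a.s. to zero, the SLLN applied to the oracle terms gives $\hat\tau_{\mathrm{DML}}\xrightarrow{a.s.}\tau^0_{u\to v}$. Taking absolute values preserves a.s. convergence.

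\textbf{Main obstacle.} I expect the hardest step to be upgrading the in-probability guarantees (PC consistency, RESET branch selection, DML consistency) to almost-sure ones, and handling the data-dependent choice of $G^*$ and of the estimator branch. My first attempt would be to work on the event $\Omega_0=\{G^*=G_0 \text{ and the RESET choice stabilises, for all large } n\}$ and show $P(\Omega_0)=1$ via summable error bounds and Borel--Cantelli. Both branches are consistent under Assumption~\ref{ass:B4}, since DML is valid for linear edges as well, so even an unstable branch choice does not break the limit. On $\Omega_0$ the estimator is eventually a fixed consistent estimator, and the conclusion follows.
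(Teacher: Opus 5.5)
\paragraph{Comparison with the paper's proof.}
Your proposal is correct in outline and uses the paper's top-level decomposition. You show $f_{uv}\to 1$ on true edges (the paper's Part~III) and $A_{uv}\to|\tau^0_{u\to v}|$ separately on the OLS and DML tracks (Parts~I and~II), then combine them by the continuous mapping theorem.

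The two routes differ in how almost-sure convergence is reached and in the DML lemma.
\begin{itemize}
\item The paper gets $f_{uv}\xrightarrow{a.s.}1$ from in-probability PC consistency plus a law of large numbers over resamples. That tacitly needs $B\to\infty$ and does not by itself give almost-sure convergence in $n$.
\item The paper gets a.s.\ convergence of $\hat\tau_{\mathrm{DML}}$ from $\sqrt n$-asymptotic normality (Neyman orthogonality, $n^{-1/4}$ nuisance rates). That conclusion does not follow from convergence in distribution.
\item Your Borel--Cantelli argument with summable error probabilities can actually deliver the a.s.\ statement, under the extra conditions listed below.
\item Your oracle decomposition with the SLLN and Cauchy--Schwarz needs only a.s.\ nuisance consistency, with no rates and no orthogonality. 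That is a genuinely simpler consistency proof.
\end{itemize}

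You also handle the randomness of $E^*$, $\X_{uv}$ and the RESET branch, which the paper does not. Noting that branch switching is harmless on linear edges, because OLS and DML share the same a.s.\ limit there, is a clean device. On genuinely nonlinear edges, however, OLS is inconsistent, so there you do need RESET to reject eventually a.s.

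What the paper supplies and you take for granted is identification. It derives $\tau_{u\to v}=\beta_{uv}$ from the structural equation and the population normal equations. It also proves the partialling-out identity $\tilde V=\tau_{u\to v}\tilde U+\xi_v$, which eliminates $g$. Your DML step silently uses $\bbE[UV]=\tau^0_{u\to v}\bbE[U^2]$, which is exactly that identity, and you should say so.

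\paragraph{Assumptions your route imports.}
These should be stated explicitly.
\begin{itemize}
\item \emph{Type~I errors.} The claim that each CI test errs with probability $O(e^{-cn})$ is false for the type~I error of a test at the fixed level $\alpha_{\mathrm{CI}}$. This is harmless for $f_{uv}\to 1$ on a true edge, since deleting a truly adjacent pair requires a type~II error, which is exponentially small. But to get $G^*=G_0$ eventually you need a level sequence $\alpha_n\to 0$ whose type~I error probabilities are summable, with $\log(1/\alpha_n)=o(n)$ to keep power.
\item \emph{Non-Gaussian data.} The standard exponential bounds for Fisher-$z$ tests are Gaussian, whereas Assumption~\ref{ass:B3} makes the data non-Gaussian. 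You need sub-Gaussian errors, or, on the original sample, an SLLN/LIL argument with a shrinking threshold.
\item \emph{Orientation.} Your event $\Omega_0$ needs an almost-sure version of Proposition~\ref{prop:ovs}, which is stated only asymptotically.
\item \emph{Notation.} You should have $\tilde U_i=U_i-(\hat\ell-\ell)(\X_{uv,i})$; the error $\hat m-m$ belongs in $\tilde V_i$.
\end{itemize}
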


\begin{proof}
    The detailed proof is given in the Appendix \ref{thm:proof}.
\end{proof}

\begin{corollary}
\label{cor1}
Under the conditions of the theorem,
$
A_{uv}^{\mathrm{stab}}
$
consistently estimates
$
|\tau_{u \to v}^0|
$
for every true directed edge
$
(u \to v) \in E^*,
$
and converges to zero for every false edge. Edges with low bootstrap inclusion frequency
$
f_{uv}
$
are proportionally shrunk toward zero, reflecting genuine structural uncertainty, without invalidating identification for edges whose presence is stable across resamples.
\end{corollary}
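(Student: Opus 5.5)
The corollary follows from Theorem~\ref{thm:ate} once the edge set is split into true and false edges. Both the inclusion frequency $f_{uv}$ and the effect estimate $A_{uv}$ are controlled separately, and their product is handled by the continuous mapping theorem. Fix $B$ and let $n \to \infty$.

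\textbf{True edges.} Let $(u\to v)$ be an edge of $G_0$. By Assumptions~\ref{ass:B1}--\ref{ass:B2}, the PC algorithm recovers the true CPDAG almost surely as $n\to\infty$. I would argue that this holds uniformly over the $B$ bootstrap resamples. Each resample is an i.i.d.\ draw from the empirical distribution, and the conditional independence tests remain consistent on it. A union bound over the finitely many resamples and edges then gives $f_{uv}\to 1$ almost surely, so the edge enters $\calS$ for any $\theta<1$. Proposition~\ref{prop:ovs} shows that the OVS orients the edge correctly. Hence the backdoor set $\X_{uv}$ read off $G^*$ is valid for the true graph, and Assumption~\ref{ass:B5} applies. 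Theorem~\ref{thm:ate} then gives $A^{\mathrm{stab}}_{uv}=f_{uv}A_{uv}\xrightarrow{a.s.}|\tau^0_{u\to v}|$. Equivalently, $A_{uv}\to|\tau^0_{u\to v}|$ by the OLS or DML consistency under Assumption~\ref{ass:B4}, and the continuous mapping theorem handles the product.

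\textbf{False edges.} Let $(u,v)$ be an edge absent from the skeleton $E_0$. Faithfulness (Assumption~\ref{ass:B1}) supplies a separating set, and with a consistent CI test the PC algorithm removes the edge with probability tending to one on every resample. Therefore $f_{uv}\to 0$. It remains to show that $A_{uv}$ stays bounded. The conditional independence $u\perp v\mid \X_{uv}$ makes the population regression coefficient zero. Even with a misspecified adjustment set, the OLS or DML coefficient converges to a finite projection coefficient, because the second moments are finite and positivity holds. So $A_{uv}=O(1)$ almost surely, and $f_{uv}A_{uv}\to 0$.

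\textbf{Shrinkage of unstable edges.} For an edge with inclusion frequency strictly inside $(0,1)$, the weighting acts only multiplicatively on a consistent $A_{uv}$. It therefore does not alter the backdoor set or the identification argument, which depend only on $G^*$.

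\textbf{Main obstacle.} The hardest step is turning ``PC is consistent'' into almost-sure convergence of the bootstrap frequencies $f_{uv}$. The paper's consistency statements are asymptotic in probability, and the frequencies depend on resampled data. My first attempt would use a fixed level $\alpha_{\mathrm{CI}}=\alpha_n\to0$ together with exponential tail bounds for partial-correlation tests, so that Borel--Cantelli applies over $n$. If that fails, I would weaken the conclusion to convergence in probability, which already suffices for the word ``consistently'' in the statement.
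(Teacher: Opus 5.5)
Your decomposition is the paper's own. The paper gives no separate proof of this corollary; its argument is Part~III of the proof of Theorem~\ref{thm:ate}. You split into true and false edges, show $f_{uv}\to 1$ or $f_{uv}\to 0$, apply the continuous mapping theorem to $f_{uv}A_{uv}$ on true edges, and use boundedness of $A_{uv}$ on false ones. That is the same structure.

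The one step where you take a genuinely different route is the limit of $f_{uv}$. The paper applies the law of large numbers across bootstrap resamples. That is, it sends $B\to\infty$ so that $f_{uv}$ approaches the ideal bootstrap inclusion probability $p_n^*$ given the data, and then appeals to PC consistency as $n\to\infty$. This route has three weaknesses:
\begin{itemize}
\item it is an iterated limit, so it does not describe the fixed-$B$ procedure actually run;
\item it silently equates PC consistency on the original sample with consistency on resampled data;
\item even granting bootstrap consistency, it only gives $\mathbb{E}[1-p_n^*]\to 0$, i.e.\ convergence in $L^1$, not the almost-sure convergence the paper asserts.
\end{itemize}
Your fixed-$B$ union bound needs only per-resample consistency of the CI tests. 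It yields $\mathbb{P}(f_{uv}=1)\to 1$ on true edges and $\mathbb{P}(f_{uv}=0)\to 1$ on false ones, which is a proof of consistency for the algorithm as specified. Note, however, that a union bound gives convergence in probability, not the ``almost surely'' you write in your true-edge paragraph. The Borel--Cantelli summability you sketch at the end is exactly what the upgrade would require. Your in-probability fallback is adequate, because the corollary itself only claims consistency.

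There are three smaller points to fix.
\begin{enumerate}
\item State $\alpha_n\to 0$ as a hypothesis rather than as a tactic. With a fixed $\alpha_{\mathrm{CI}}$, a false edge survives each resample with probability bounded away from zero (resampling typically inflates it). Hence $\mathbb{E}f_{uv}\not\to 0$, and by bounded convergence $f_{uv}\not\to 0$ even in probability.
\item Drop the sentence claiming that $u\perp v\mid\X_{uv}$ forces a zero coefficient. Here $\X_{uv}$ is read off a wrong graph and need not separate $u$ and $v$. Your projection-coefficient sentence suffices. In fact, once $f_{uv}<\theta$ the edge leaves $\calS$, so $A^{\mathrm{stab}}_{uv}=0$ on an event of probability tending to one, and boundedness is not even needed.
\item Proposition~\ref{prop:ovs} is unnecessary for the true-edge case. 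The theorem's hypotheses already grant ignorability for the backdoor set read off $G^*$. Invoking the proposition instead imports the extra condition $w_L>\gamma$, which is not among the corollary's assumptions.
\end{enumerate}
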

Corollary \ref{cor1} gives $\mathbf {A}^{\mathrm{stab}}$ a precise structural interpretation, which can be described in brief as the entries corresponding to the stable causal edges converge to the true ACE magnitudes, whereas the entries corresponding to spurious edges decay asymptotically to zero without requiring any arbitrarily chosen threshold. This asymptotic separation is what enables the subsequent transition to spectral methods.
 Normalization of the rows $\mathbf{A}^{\mathrm{stab}}$ produces a transition matrix whose stationary distribution reflects the propagation of long-term causal influence through the graph. Chung's directed Laplacian, constructed from this transition matrix together with its time reversal, is symmetric and positive semi-definite precisely because the underlying edge weights are asymptotically well behaved. In contrast, noisy or spuriously inflated edge weights would distort the transition operator and violate the spectral regularity conditions required for stable eigen space recovery.

\vspace{5pt}
\noindent
\textbf{Step 3: Chung Laplacian and Spectral Decomposition.}
\label{sec:laplacian}

We normalize $\A^{\mathrm{stab}}$ with PageRank teleportation parameter $\alpha$ to obtain the transition matrix $P$ \citep{page1999pagerank}.
Let $D = \mathrm{diag}(\mathbf{A}^{\mathrm{stab}}\mathbf{1})$
be the diagonal matrix of the sums of the rows of $\mathbf{A}^{\mathrm{stab}}$, where
$\mathbf{1} \in \mathbb{R}^{|\mathbf{V}^*|}$ is the vector of all ones. The transition matrix $P$ is defined as,
\begin{equation}\label{eq:pagerank-matrix}
   P = (1-\alpha)\,D^{-1}\mathbf{A}^{\mathrm{stab}}
     + \frac{\alpha}{|\mathbf{V}^*|}\,\mathbf{1}\mathbf{1}^{\top},
\end{equation}
Its stationary distribution $\pi$ and the time-reversed matrix $P^*_{ij} = (\pi_j/\pi_i)\,P_{ji}$ yield Chung's directed Laplacian
\citep{chung2005laplacians}:
\begin{equation}\label{eq:laplacian}
   L = \I - \tfrac{1}{2}(P + P^*).
\end{equation}
$L$ is symmetric and positive semi-definite \citep{chung1997spectral}. We select the embedding dimension as: $ K^* = \argmax_{j}\,(\lambda_{j+1} - \lambda_j), \hspace{0.1cm} j \in \{1,2,  \ldots, (|\mathbf{V}^*|-1)\}$ taking the index of the largest spectral gap. The motivation comes from Cheeger's inequality \citep{chung2005laplacians}, i.e., a large gap at position $K^*$ corresponds to $K^*$ well-separated causal communities. The perturbation matrix $E=\hat{L}-L_0$ is the difference between the estimated Laplacian and the population Laplacian. The perturbation arises from two sources: the DAG estimation error in $\hat{G}^*$ and the ACE estimation error in ${\mathbf{A}}^{\mathrm{stab}}$. Define the sequence of events
$\mathcal{A}_n = \left\{\hat{G}^{*}=G_0
\right\}.$ On the event $\mathcal{A}_n$, the DAG estimation error vanishes, so \(E\) is driven entirely by ACE estimation error. For Track OLS, ordinary least squares estimation yields $|\hat{\beta}_{uv}-\beta_{uv}|
=
\mathcal{O}(n^{-1/2}),
$
by standard asymptotic theory. Track DML gives
$
|\hat{\tau}_{\mathrm{DML}}-\tau_{u\rightarrow v}|
=
\mathcal{O}(n^{-1/2}),
$
under the rate conditions of Theorem~3.1 in \citep{chernozhukov2018double}. Since \(\mathbf{A}^{\mathrm{stab}}\) contains at most $|\mathbf{V}^*|^2$ entries, and each entry converges at rate \(\mathcal{O}(n^{-1/2})\), the Frobenius norm of the adjacency estimation error satisfies $\|{\mathbf{A}}^{\mathrm{stab}}-\mathbf{A}_0\|_F =\mathcal{O}\!\left(|\mathbf{V}^*|\,n^{-1/2}\right)$, where $\mathbf{A}_0$ denotes the true population adjacency matrix weighted with ACE, whose entries $|\tau^0_{u \to v}|$ encode the magnitude of the true average causal effect along each directed edge $(u \to v) \in E^*$. Since the Laplacian construction is a continuous composition of bounded linear operators applied to \(\mathbf{A}^{\mathrm{stab}}\), the perturbation norm obeys
$
\|E\|_F
=
\mathcal{O}\!\left(|\mathbf{V}^*|\,n^{-1/2}\right).
$
$\mathbf{V}_{K^*}$ denotes the matrix of the $K^*$ non-trivial eigenvectors of the population Chung directed Laplacian $L_0$, i.e. $\mathbf{V}_{K^*} = [\mathbf{v}_2, \mathbf{v}_3, \ldots, \mathbf{v}_{K^*+1}]$. The first eigenvector $\mathbf{v}_1$ associated with $\lambda_1 = 0$ is the trivial constant eigenvector and is excluded because it carries no structural information about causal community separation. Its estimated counterpart $\hat{\mathbf{V}}_{K^*}$ is the analogous matrix of eigenvectors extracted from the estimated Laplacian $\hat{L}$, constructed from the adjacency matrix ${\mathbf{A}}^{\mathrm{stab}}$. Applying the Davis--Kahan perturbation inequality \citep{davis1970rotation} gives
\begin{equation} \label{eq:davis-kahan}
\|\sin\Theta(\mathbf{V}_{K^*},\hat{\mathbf{V}}_{K^*})\|_F
\leq
\frac{\|E\|_F}{\delta_{K^*}}
=
\mathcal{O}\!\left(
\frac{|\mathbf{V}^*|}
{\delta_{K^*}\sqrt{n}}
\right).
\end{equation}

This bound becomes asymptotically informative whenever \(\delta_{K^*}\) remains bounded away from zero as $n\rightarrow\infty,$ which holds whenever the true causal graph contains at least \(K^*\) sufficiently separated communities. Consequently, for fixed \(|\mathbf{V}^*|\) and fixed eigengap \(\delta_{K^*}\), where $\delta_{K^*} = (\lambda_{K^*+1} - \lambda_{K^*})$. The embedding perturbation decays at the parametric rate $n^{-1/2}.$ Therefore, the asymptotic stability claim follows only after explicitly bounding \(\|E\|_F\).

\vspace{3pt}
\noindent 
\textbf{Step 4: Causal Spectral Embedding and Clustering.}
\label{sec:cluster}

Let, $\mathbf{X} \in \mathbb{R}^{n \times q}$ denote the original data matrix. We define $\mathbf{X}^* \in \mathbb{R}^{n \times |\mathbf{V}^*|}$ is the column-standardized data matrix over the post-contraction variable set $\mathbf{V}^*$.  For each pair of variables merged during the edge contraction step, the corresponding columns of $\mathbf{X}$ are replaced by their column-wise average. We project $\mathbf{X}^*$ onto $\Vk$:
\begin{equation}\label{eq:embedding}
   \xt = \X^*\Vk \in \bbR^{n \times K^*}.
\end{equation}
Euclidean distance in $\xt$ measures how differently two observations are situated with respect to the causal structure of $G^*$: observations $i$ and $j$ with $\tilde{\mathbf{x}}_i = \tilde{\mathbf{x}}_j$ respond identically to all $K^*$ causal modes. We run $k$-means over $k \in \{2, \ldots, K^*+2\}$ and select the suitable $k$ by the Silhouette score \citep{peter1987graphical} and the Gap statistic \citep{tibshirani2001estimating}.

Theorem \ref{thm:main} in Appendix \ref{thm:consistency} establishes almost sure consistency of the full pipeline. The argument threads through five convergence results in sequence. Bootstrap inclusion frequencies concentrate at their population limits, so the stable skeleton recovers the true edge set. The OVS orientations are then followed by Proposition \ref{prop:ovs}. Once the skeleton and its orientations are right, Theorem \ref{thm:ate} gives convergence of the adjacency weights to the true ACE magnitudes. From there, consistency of the Chung Laplacian is a consequence of the continuity of row normalization and time reversal, with no additional assumptions needed. The Davis Kahan inequality then bounds the embedding error at the parametric rate $\mathcal{O}(|{\bf V}^*| / \sqrt{n})$, which goes to zero for any fixed graph with a nonzero spectral gap. The outcome is that the clustering stage inherits the consistency of every upstream component, provided the eigen gap stays bounded away from zero as sample size grows, a condition that holds whenever the true causal graph contains sufficiently separated communities.

\section{Simulation Study}
The simulation study assesses CaSPECT in controlled settings where the true DAG, true cluster assignments, and true cluster-level ACEs are known by construction. We consider three increasingly difficult settings for isolating different aspects of the pipeline. All experiments are repeated across Monte Carlo replications for each sample size, and performance is summarized using the three metrics defined in the following subsections.

\subsection{Data Generating Process}
The data matrix is generated from a linear structural equation model (SEM) on variables $\{X_1, \dots, X_q\}$, where $Z$ is the treatment and $Y$ is the outcome. The true DAG is constructed by first drawing a random topological order of the nodes uniformly from the set of all orderings. Then each edge $(X_i \to X_j)$ is included independently, with the constraint that $X_i$ precedes $X_j$ in the ordering with probability $\rho$. The structural coefficients for the included edges are drawn independently as $\beta_{ij} \sim s_{ij}\,.\mathrm{Unif}(0.3,0.8)$, where the Rademacher factor $s_{ij}$ assigns a sign uniformly to $\{-1,+1\}$. This ensures effects are neither too weak to detect nor unrealistically large.
The observed value of each variable is generated recursively following the topological order:
$$
X_j = \sum_{X_i \in \mathrm{Pa}_{\mathcal{G}}(X_j)} \beta_{ij} X_i + \varepsilon_j,
$$
where $\mathrm{Pa}_{\mathcal{G}}(X_j)$ denotes the parent set of $X_j$ in $\mathcal{G}$ and $\varepsilon_j$ is an independent error term whose distribution varies across simulation settings as described below. Now we define the cluster structures for the ground truth comparison. The observations are divided into ground truth clusters by the following mechanism. Let $Z_i$ denote the treatment value for unit $i$. The cluster membership is assigned by shifting the treatment intercept. To be precise, within cluster $c$,
$
Z_i = \alpha_c + \sum_{X_u \in \mathrm{Pa}_{\mathcal{G}}(Z)} \beta_{uZ} X_{ui} + \varepsilon_{Zi},
$
where the intercept shifts are $\alpha_1 = -1.0$, $\alpha_2 = 0.0$, $\alpha_3 = +1.0$. This places the three clusters at systematically different positions in the causal propagation landscape. Cluster 1 contains predominantly low treatment units, Cluster 3 contains high treatment units, and Cluster 2 contains the intermediate region.

The outcome follows the same structural equation in all clusters, so the cluster-level ACEs differ only because the distributions of the backdoor covariates differ across clusters. The true ACE at the cluster-level for cluster $c$ is computed analytically as:
$
\tau_c
=
\sum_{\pi \in \mathcal{P}(Z \to Y)}
\prod_{(u \to v)\in \pi} \beta_{uv},
$
where the sum is over all directed paths from $Z$ to $Y$ through the DAG, cluster sizes are set to approximately equal thirds of $n$, with minor random variation. Now we define the three separate constructions, which are specifically defined to address the different methodologies and assumptions of the pipeline. The first scenario involves constructing a linear DAG with non-Gaussian errors. The errors are drawn from a standardized $t_5$ distribution whose variance is normalized to $1$, satisfying Assumption \ref{ass:B3} throughout. All $8$ variables have non-Gaussian residuals ($n_{\mathrm{NG}} = q$), so $w_L = \left(\frac{8}{8}\right)\cdot 0.30 = 0.30$ is the target LiNGAM weight. This setting represents the most favorable environment for CaSPECT since, under all assumptions, the graph is sparse and the cluster structure is well-separated by construction. It therefore provides a performance ceiling against which the more challenging settings can be compared. We have considered sample sizes as $n\in\{500,1000,2000\}.$ In the second scenario, we have considered a mixed linearity setting with setting three non-linear edges. The DAG structure and error distribution are identical to the first scenario, but three randomly selected edges are replaced by nonlinear structural equations. The edges are selected based on their absolute structural coefficients, ensuring that nonlinearity meaningfully influences the generated data. Specifically, instead of $v = \beta_{uv}u +\varepsilon_v,$ the affected edges follow $v = \beta_{uv}\sin(\pi u) + \varepsilon_v.$ The RESET test should identify these three nonlinear edges; their estimation will be done by the DML method, while the remaining five linear edges are estimated through the Ordinary Least Squares method. We know that the assumption of causal sufficiency is very crucial to this algorithm. So, in the third scenario, we try to investigate how much relaxation of this assumption is possible for valid inferential results. The DAG structure and error distribution from first scenario are retained, but a latent confounder $ H \sim \mathcal{N}(0,1)$ is introduced with outgoing edges to two randomly selected observed variables $u^*, v^* \in \mathbf{V}\setminus\{Z,Y\},$ using structural coefficients $\gamma_{u^*}=\gamma_{v^*}=0.4.$ This setting evaluates robustness to mild violations of Assumption~\ref{ass:B2} and tests whether bootstrap stability weighting suppresses spurious edges before they propagate into Laplacian construction. The next natural task is to evaluate our proposed method over a few ablation settings. Three ablation conditions are evaluated within the first scenario, as mentioned before, at $n = 1000$ to isolate the contribution of each novel component of the proposed pipeline. Firstly, we replace OVS by PC only orientation. The LiNGAM component is removed by forcing $\delta_{uv}^{L}=0$ for all edges. The OVS then reduces to $\mathrm{OVS}_{uv}=f_{uv}\cdot(\rho_{uv}-\rho_{vu}),$ which represents a pure bootstrap PC orientation score. This experiment isolates the marginal contribution of the LiNGAM component to edge orientation. Next we remove the Bootstrap stability weighting from the pipeline. All edges appearing in any bootstrap resample are retained by setting $f_{uv}=1,$ thereby removing the continuous stability weighting mechanism. Edge weights become $A_{uv}=|\hat{\beta}_{uv}|,$ without multiplication by the stability factor $f_{uv}$. This setting tests whether proportional bootstrap weighting provides a meaningful benefit relative to a binary edge inclusion strategy. Finally, we replace the edge contraction symmetrization. For residual ambiguous edges satisfying $|\mathrm{OVS}_{uv}| \leq \tau,$ the pipeline replaces contraction with edge symmetrization: $A_{uv}=A_{vu}=\frac{|\hat{\beta}_{uv}|+\hat{\beta}_{vu}|}{2}.$ Adjusted Rand Index (ARI) \citep{ari} quantifies agreement between recovered cluster assignments $\hat{\mathbf{c}}=(\hat{c}_1,\ldots,\hat{c}_n)$ and ground-truth assignments  $\mathbf{c}^{0}=(c_1^{0},\ldots,c_n^{0}).$ $\mathrm{ARI}=1$ indicates perfect cluster recovery, whereas $\mathrm{ARI}=0$ corresponds to chance level recovery. OVS orientation accuracy measures the proportion of edges in the stable skeleton $\mathcal{S}$ that are assigned the correct direction. An edge is considered correctly oriented if the final direction assigned by either OVS or the resolution hierarchy matches its direction in the true graph $G_0$. Reported values are averaged across simulation replications.
For cluster $c\in\{1,2,3\}$ and replication $m$, cluster-level ACE bias and root mean squared error are computed as, $\mathrm{Bias}_{c} =\frac{1}{M} \sum_{m=1}^{M}\left(\hat{\tau}_{c}^{(m)}-\tau_c^{0}\right),$ and $\mathrm{RMSE}_{c}= \sqrt{
\frac{1}{M} \sum_{m=1}^{M} \left(\hat{\tau}_{c}^{(m)}-\tau_c^{0}\right)^2}.$
The cluster-specific ACE estimate $\hat{\tau}_{c}^{(m)}$ is obtained by restricting the DML estimator to observations assigned to cluster $c$ in replication $m$. Because cluster assignments may vary across replications, these metrics simultaneously capture treatment effect estimation quality and clustering recovery performance.

\subsection{Results and Discussions}

Table~\ref{tab:s1_results}--\ref{tab:s3_results} summarizes the performance of CaSPECT across the three simulation settings. In the First Scenario (clean linear DAG with non-Gaussian errors), causal effect estimation improves steadily with increasing sample size. The RMSE for all three cluster-specific ACEs decreases monotonically, reaching 0.063, 0.074, and 0.085 at \(n=2000\). In contrast, clustering performance decreases as sample size increases, with ARI declining from 0.244 to 0.049. This suggests that the variable-level spectral embedding captures the dominant global causal structure but is less effective at preserving the treatment-intercept shifts that define the true subpopulations. In the second scenario (mixed linear and nonlinear mechanisms), we observe a different behavior. RMSE values increase from approximately 0.04 at \(n=500\) to over 0.60 at larger sample sizes. This pattern is consistent with the adaptive estimation strategy employed by CaSPECT. At smaller sample sizes, the RESET test frequently fails to detect nonlinearity, resulting in low-variance OLS-based estimation. As sample size increases, nonlinear relationships are identified more reliably, and estimation is routed to the DML stage, which incurs considerably higher finite sample variance, particularly when clustering quality remains poor (\(\text{ARI}\approx 0.02\)–0.05). The third scenario evaluates robustness to mild violations of causal sufficiency through the introduction of a latent confounder. 
\begin{table}[h]
\centering
\caption{Simulation Results for Linear DAG.}
\label{tab:s1_results}
\begin{tabular}{cccccc}
\hline
$n$ & ARI & OVS Accuracy & RMSE$_{C1}$ & RMSE$_{C2}$ & RMSE$_{C3}$ \\
\hline
500  & 0.244 & 0.361 & 0.176 & 0.302 & 0.382 \\
1000 & 0.042 & 0.598 & 0.079 & 0.093 & 0.102 \\
2000 & 0.049 & 0.230 & 0.063 & 0.074 & 0.085 \\
\hline
\end{tabular}
\end{table}
Despite the misspecification, graph recovery improves substantially as the sample size increases. Orientation performance also improves, with OVS accuracy reaching its highest observed value of 0.643 at \(n=2000\). This enhanced structural recovery directly results in improved causal estimation, producing the lowest RMSE values observed in all simulation settings (0.047, 0.067, and 0.070 for the three clusters). These results suggest that the hybrid OVS mechanism remains effective under moderate confounding and that accurate recovery of the causal topology can substantially mitigate downstream estimation bias.

\begin{table}[h]
\centering
\caption{Simulation Results for Mixed Linearity.}
\label{tab:s2_results}
\begin{tabular}{cccccc}
\hline
$n$ & ARI & OVS Accuracy & RMSE$_{C1}$ & RMSE$_{C2}$ & RMSE$_{C3}$ \\
\hline
500  & 0.032 & 0.461 & 0.027 & 0.042 & 0.048 \\
1000 & 0.051 & 0.281 & 0.681 & 0.668 & 0.735 \\
2000 & 0.020 & 0.264 & 0.630 & 0.604 & 0.637 \\
\hline
\end{tabular}
\end{table}
The simulation study reveals a consistent pattern: improvements in graph recovery and causal effect estimation do not necessarily translate into improved clustering performance. While CaSPECT achieves increasingly accurate causal identification as sample size grows, the spectral embedding stage appears less sensitive to the cluster-defining treatment heterogeneity, leading to persistently low ARI values across all settings. This highlights a key distinction between recovering causal structure and recovering latent population subgroups, and suggests that future work may benefit from embedding constructions that more directly encode causal effect heterogeneity.

\begin{table}[h]
\centering
\caption{Simulation Results with Latent Confounder, i.e., violating causal sufficiency.}
\label{tab:s3_results}
\begin{tabular}{ccccccc}
\hline
$n$ & ARI & OVS Accuracy & RMSE$_{C1}$ & RMSE$_{C2}$ & RMSE$_{C3}$ \\
\hline
500  & 0.209 & 0.452 & 0.183 & 0.202 & 0.204 \\
1000 & 0.088 & 0.301 & 0.095 & 0.129 & 0.193 \\
2000 & 0.024 & 0.643 & 0.047 & 0.067 & 0.070 \\
\hline
\end{tabular}
\end{table}

The ablation results in Table \ref{tab:tab4} provide insight into the contribution of the major components of CaSPECT. In A1, removing the OVS optimization step substantially reduces orientation accuracy, with OVS Accuracy decreasing from 0.321 to 0.225. This highlights a fundamental limitation of purely constraint-based causal discovery. Since the PC algorithm identifies a graph only up to its Markov equivalence class, many edge directions remain unresolved. Incorporating non-Gaussian information through the OVS mechanism provides additional directional evidence, and its removal propagates orientation errors to the clustering stage, where ARI falls to 0.014. In A3, which replaces the asymmetric causal graph with a symmetrized adjacency matrix, exhibits a different failure mode. However, it achieves the highest local orientation agreement (OVS Accuracy = 0.371). More importantly, symmetrization destroys the directional information that encodes long-range causal flow. As a result, the spectral embedding no longer reflects the underlying causal ordering, leading to poor cluster recovery and an ARI of only 0.018. This demonstrates that preserving graph asymmetry is essential for maintaining meaningful causal geometry in the embedding space.

The similarity between the Full Pipeline and Variant A2 reveals an important large-sample phenomenon. At \(n=1000\), most true causal edges are recovered with bootstrap inclusion frequencies close to one, making the stability weighting step nearly neutral for strongly supported edges. Consequently, both methods achieve similar topological performance. Nevertheless, A2 attains a slightly lower mean RMSE (0.037 vs 0.041), suggesting that stability weighting introduces a small amount of shrinkage into the estimated causal effects. While this may induce minor bias in large samples, it serves an important regularization role in finite samples by suppressing low-frequency spurious edges.

\begin{table}[h]
\centering
\caption{Ablation Study under the First Scenario ($n=1000$). Standard deviations are reported in parentheses.}
\label{tab:tab4}
\begin{tabular}{lccc}
\hline
\textbf{Method} & \textbf{ARI}  & \textbf{OVS Acc} & \textbf{Mean RMSE} \\ \hline
\textbf{CaSPECT} & 0.033 (0.035) & 0.321 (0.065) & 0.041 \\
\textbf{A1: PC-only orient} & 0.014 (0.024) & 0.225 (0.076) & 0.047 \\
\textbf{A2: No stab weight} & 0.033 (0.035) & 0.321 (0.065) & 0.037 \\
\textbf{A3: Symmetrization} & 0.018 (0.020) & 0.371 (0.076) & 0.037 \\ \hline
\end{tabular}
\end{table}

The OVS mechanism improves edge orientation beyond what is achievable through conditional independence information alone, while the asymmetric graph construction preserves the directional structure required for meaningful causal embeddings. Bootstrap stability weighting provides additional protection against structural uncertainty in finite samples. Together, these results support the design of CaSPECT and indicate that its asymmetric, stability-aware construction is critical for simultaneously achieving reliable causal discovery, accurate effect estimation, and robust clustering performance.

\section{Experiments} 

\subsection{Lalonde Data}
The LaLonde CPS1 dataset \citep{lalonde1986} is a canonical benchmark in causal inference, constructed by pairing the 185 treated participants in the National Supported Work (NSW) randomized job-training program with 15{,}992 control units drawn from the Current Population Survey (CPS). The dataset contains a binary treatment indicator, a 1978 earnings outcome (\texttt{re78}), pre-treatment earnings in 1974 and 1975 (\texttt{re74}, \texttt{re75}), and sociodemographic covariates including age, education, race, marital status, and high-school degree status. Its defining feature, and its enduring methodological value, is the severe incomparability between the NSW participants and the CPS comparison group: the two groups differ markedly on race, prior earnings, and labor-market attachment, so any estimator that ignores this compositional imbalance recovers a large negative treatment effect, directly contradicting the positive effect established by the randomized trial.

\vspace{3pt}
\noindent
\textbf{8.1.1 Causal Interpretation}
 
The two-cluster partition produced by CaSPECT has a direct causal interpretation
grounded in the structure of the estimated DAG in Figure \ref{fig:dag-lalonde}. The key directed edges of that
graph, along with their bootstrap-stability-weighted $|\mathrm{ACE}|$ values, are
reported in Table~\ref{tab:dag}. Cluster~1 and Cluster~2 differ primarily along three of those pathways: the race-to-treatment path
($\texttt{black}\to\texttt{treat}$, $|\hat{\beta}|=0.291$), the earnings-persistence path ($\texttt{re74/re75}\to\texttt{re78}$,
$|\hat{\beta}|=0.515$), and the education-to-earnings path
($\texttt{educ}\to\texttt{re74}$, $|\hat{\beta}|=0.066$;
$\texttt{nodegr}\to\texttt{re74}$, $|\hat{\beta}|=0.148$). Cluster~1 occupies the high-earnings, non-Black married region of the causal graph, a region where the probability of assignment of treatment is zero. Cluster~2 occupies the low-earnings, higher-nodegree, more racially diverse region where treatment assignment is feasible.

\begin{figure}[h]
    \centering
    \includegraphics[width= 0.6\linewidth]{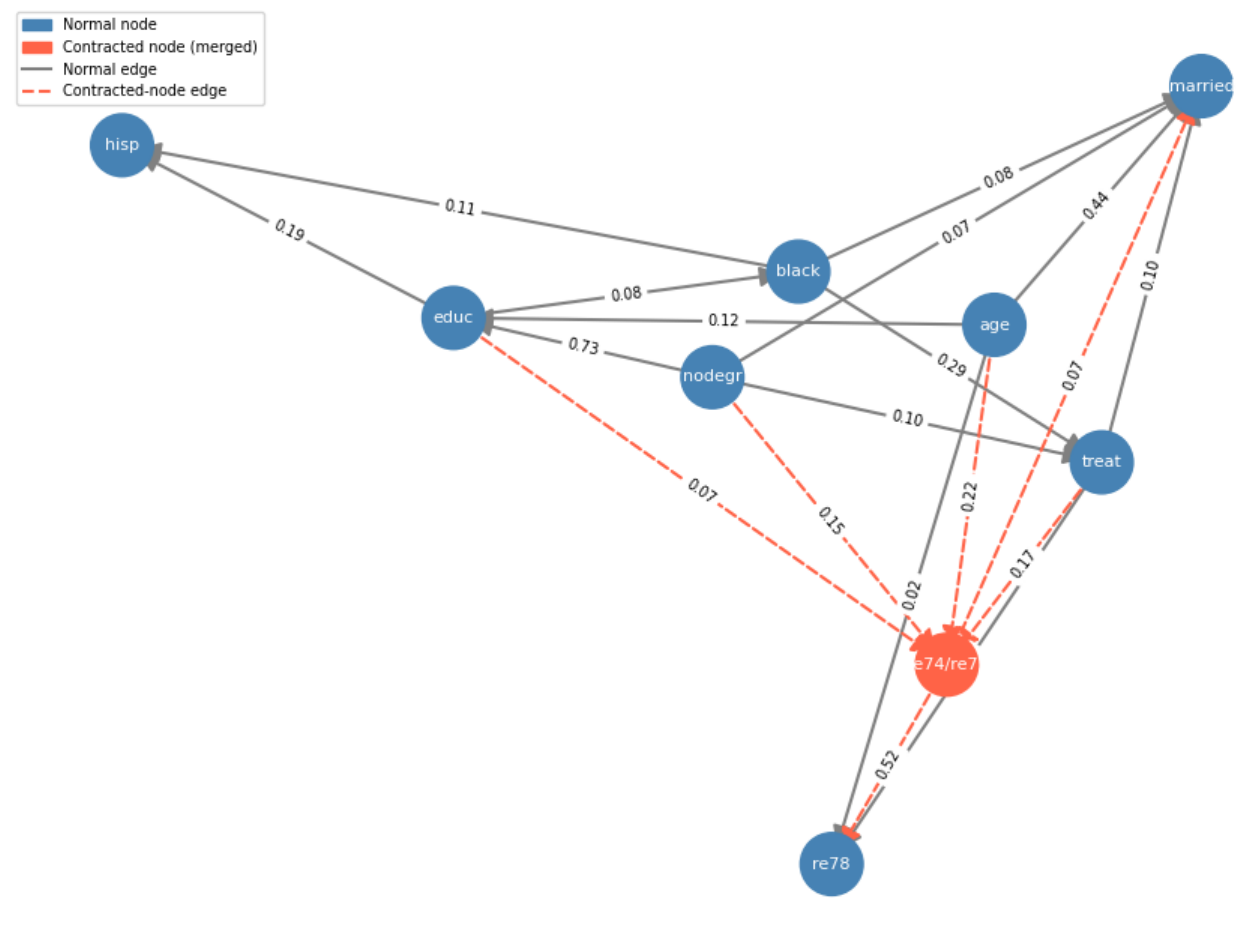}
    \caption{Estimated causal Directed Acyclic Graph (DAG) for the LaLonde dataset. Blue nodes are standard variables; the orange contracted node represents merged pre-treatment earnings variables. Edge labels denote bootstrap-stability-weighted $|\hat{\tau}|$ values.}
    \label{fig:dag-lalonde}
\end{figure}

\begin{table}[ht]
\centering
\caption{Key directed edges in the estimated causal DAG\@. Edge weights are
bootstrap-stability-weighted $|\mathrm{ACE}|$ values. Track~OLS; Double Machine Learning with five-fold cross-fitting.}
\label{tab:dag}
\begin{tabular}{lccc}
\toprule
\textbf{Causal Edge}
  & \textbf{$|\hat{\beta}|$}
  & \textbf{Stability ($f_{uv}$)}
  & \textbf{Track} \\
\midrule
$\texttt{nodegr}\to\texttt{educ}$              & 0.728 & High     & OLS \\
$\texttt{re74}\to\texttt{re78}$                & 0.515 & High     & OLS \\
$\texttt{age}\to\texttt{married}$              & 0.443 & High     & OLS \\
$\texttt{black}\to\texttt{treat}$              & 0.291 & High     & OLS \\
$\texttt{nodegr}\to\texttt{re74}$              & 0.148 & Moderate & OLS \\
$\texttt{treat}\to\texttt{re74}$               & 0.172 & Moderate & DML \\
$\texttt{treat}\to\texttt{re78}$               & 0.028 & Low      & DML \\
$\texttt{re74}\leftrightarrow\texttt{re75}$    & \textit{Contracted}
  & Low ($|\mathrm{OVS}|\leq\tau$) & --- \\
\bottomrule
\end{tabular}
\end{table}

\begin{figure*}
    \centering
    \includegraphics[width= \linewidth]{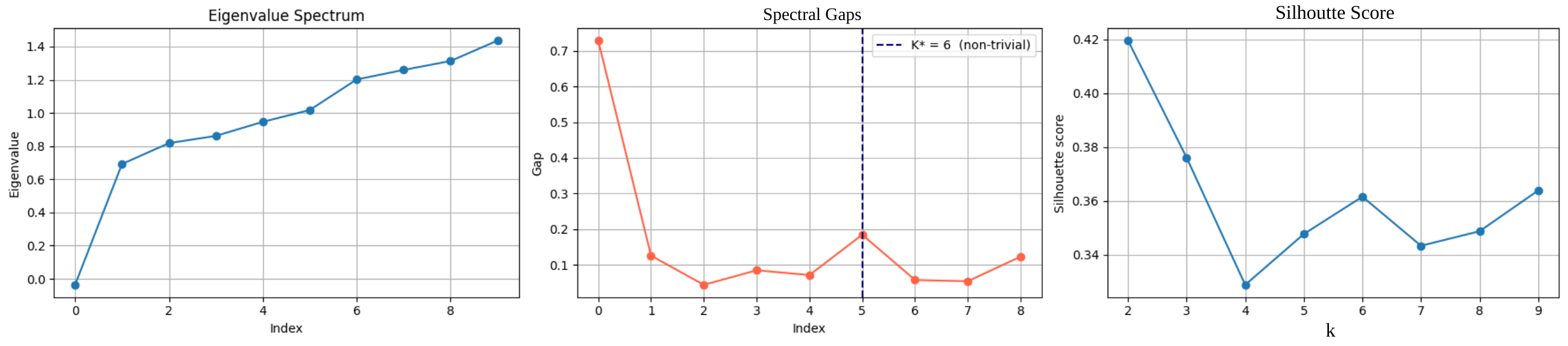}
    \caption{Eigenvalue spectrum (left), spectral gaps of the Chung directed Laplacian with the largest non-trivial gap at index $5$ selecting $K^{*} = 6$ embedding dimensions (centre), and Silhouette scores across $k \in \{2, …, 9\}$ with a global maximum at $k^{*} = 2$ (right), for the CaSPECT pipeline applied to the LaLonde dataset.}
    \label{fig:spectral-lalonde}
\end{figure*}

The strongest single edge is $\texttt{nodegr}\to\texttt{educ}$
($|\hat{\beta}|=0.728$), indicating that the absence of a high-school degree has the largest individual causal effect on educational attainment, a finding that is directionally expected and substantively validates the recovered structure. The edge $\texttt{re74}\to\texttt{re78}$ ($|\hat{\beta}|=0.515$) ccausally incomparable observations across time. The edge $\texttt{black}\to\texttt{treat}$ ($|\hat{\beta}|=0.291$) correctly encodes the racial composition differential between the NSW treatment group and the CPS comparison group, a well-documented feature of the LaLonde design. The edge $\texttt{treat}\to\texttt{re78}$ receives a bootstrap-weighted $|\mathrm{ACE}|$ of only~0.028. This does not imply a null treatment effect; it reflects that the DAG edge from treatment to outcome is identified with low bootstrap stability at the full-sample level, consistent with the severe confounding documented by \citep{lalonde1986}. This cluster separation was achieved without any specified propensity score model, trimming rule, or matching algorithm. It emerged from the spectral embedding of the Chung Laplacian over the estimated causal graph, which encodes causal flow through backdoor-identified edge weights. The result is a form of implicit common support enforcement: observations that are causally incomparable are embedded far apart in spectral space and cluster separately. The contraction of \texttt{re74} and \texttt{re75} into a single node is not a methodological failure but a substantively correct representation of the data. The two pre-treatment earnings variables are measured in adjacent years (1974 and 1975) and are co-determined by the same persistent economic factors, individual ability, local labor market conditions, and employment history. As reported in the final row of Table~\ref{tab:dag}, the OVS score of~0.08 is below the edge orientation threshold $\gamma=0.15$, reflecting genuine identification failure: neither bootstrap PC evidence nor LiNGAM can determine the direction of causal
flow between them, as expected for two variables that are likely jointly caused rather than sequentially caused. The contraction preserves DAG acyclicity and backdoor validity. Downstream variables (\texttt{married}, \texttt{re78}) receive correct adjustment through the contracted node, which inherits averaged incident weights. Future analyses with additional temporal resolution or panel data could impose domain knowledge ordering ($\texttt{re74}\to\texttt{re75}$) via the pipeline's
\texttt{domain\_order} parameter, resolving the contraction at the cost of an untestable identifying assumption. Table~\ref{tab:clusters} reports the mean covariate profiles for each cluster. The contrast is stark and maps directly onto the known compositional imbalance of the LaLonde CPS1 dataset. The t-SNE visualization of CaSPECT cluster assignments is shown in Figure \ref{fig:lalonde-tsne}.

\begin{table*}[ht]
\centering
\caption{Mean covariate profiles by cluster. Income variables (\texttt{re74},
\texttt{re75}, \texttt{re78}) are \texttt{log1p}-transformed. \texttt{Treat} is the proportion treated within each cluster.}
\label{tab:clusters}
\setlength{\tabcolsep}{4.5pt}
\begin{tabular}{lcccccccccc}
\toprule
 & \textbf{Age} & \textbf{Educ} & \textbf{Black} & \textbf{Hisp}
 & \textbf{Married} & \textbf{Nodegr} & \textbf{re74} & \textbf{re75}
 & \textbf{Treat} & \textbf{re78} \\
\midrule
\textbf{Cluster 1} \scriptsize{($N=13{,}047$)}
  & 33.4 & 12.1 & 0.000 & 0.078 & 0.735 & 0.281 & 9.060 & 9.361 & 0.000 & 8.908 \\
\textbf{Cluster 2} \scriptsize{($N=3{,}130$)}
  & 31.9 & 11.8 & 0.424 & 0.046 & 0.582 & 0.384 & 4.435 & 3.511 & 0.059 & 5.190 \\
\bottomrule
\end{tabular}
\end{table*}

Cluster~1 (80.7\,\% of the sample) contains exclusively control units ($\texttt{treat}=0.000$). Its members are predominantly non-Black, older (mean age~33.4), largely married (73.5\,\%), with high pre-treatment earnings ($\log\,\texttt{re74}=9.060$, approximately \$8,600 in levels). These are
economically stable individuals who were never plausible candidates for the job training program and are incomparable to the treated group on every dimension of economic disadvantage. Positivity fails outright for this cluster: there are
zero treated observations, so no treatment effect estimate is possible or meaningful. Cluster~2 (19.3\,\% of the sample) contains all 185 treated individuals and
2,945 controls. It is 42.4\,\% Black, younger (mean age~31.9), less frequently
married (58.2\,\%), with pre-treatment log earnings of only~4.435 (approximately \$85 in levels) and a no-degree rate of 38.4\,\%. These characteristics match the socioeconomic profile of the NSW programme participants and were recovered through causal-structural similarity, not covariate matching.

In Figure \ref{fig:spectral-lalonde}, the spectral gap at $K^{*}=6$ is moderate rather than sharp (gap magnitude
$\approx0.19$ versus neighboring gaps of $0.08$--$0.12$). This indicates that the six-dimensional causal embedding captures genuine but weak community variation: the LaLonde causal graph is more chain-like than modular, consistent with a sequential sociodemographic-to-earnings structure. The clustering into $k^{*}=2$ within this six-dimensional space correctly identifies the one dominant partition (comparable vs. incomparable), but does not resolve finer heterogeneity within the comparable subgroup. The effects of differential treatment by race, education, or previous earnings are likely present within Cluster~2, but cannot be reliably detected with the current sample size and graph structure.

\begin{figure}
    \centering
    \includegraphics[width= 0.5 \linewidth]{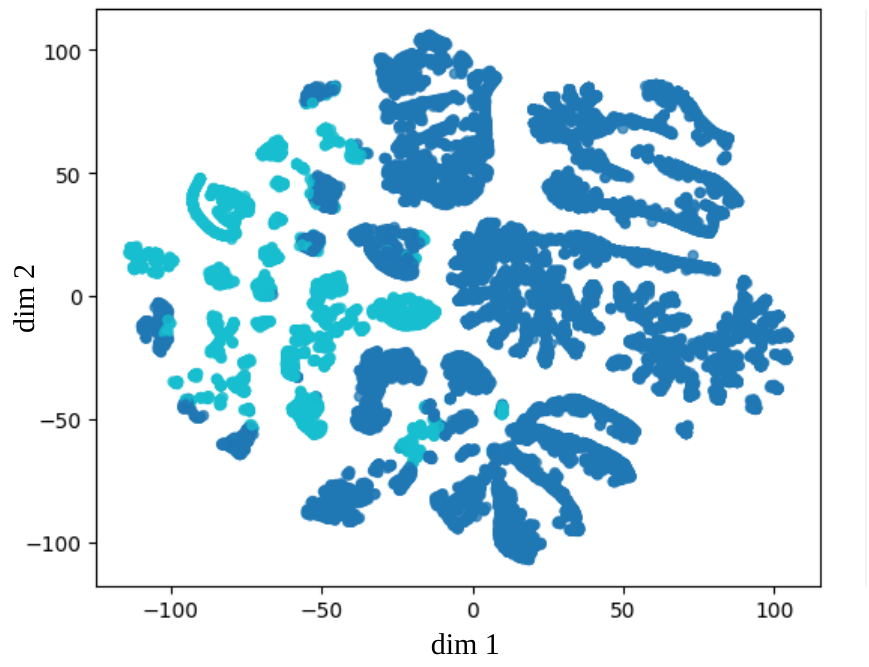}
    \caption{t-SNE visualization of CaSPECT cluster assignments $(k^{*} = 2, K^{*} = 6)$ on the LaLonde dataset, where Cluster $1$ (dark blue) identifies the incomparable CPS comparison group and Cluster $2$ (cyan) captures the causally comparable subpopulation containing all treated units.}
    \label{fig:lalonde-tsne}
\end{figure}

\vspace{3pt}
\noindent
\textbf{8.1.2 Discussions.}
Table~\ref{tab:ate} reports the ACE estimates at both the global and cluster levels. The central finding is the sign reversal: the global ACE is $-1.7011$, while the Cluster~2 ACE is $+1.3998$. Both estimates are highly significant ($p<0.001$). The global ACE of $-1.7011$ is not noise. But it measures the wrong quantity, the earnings gap between economically stable CPS participants and economically disadvantaged NSW participants, dressed up as a treatment effect. Eighty percent of the CPS sample sits in Cluster~2, individuals with no structural similarity to the treated group. Mixing them into the comparison group swamps the signal from the 2,945 structurally comparable controls in Cluster~2. The confounding is not subtle; it is overwhelming, and any estimator that does not enforce comparability will reproduce it. Once the analysis is restricted to Cluster~2, the picture reverses.

\begin{table}[h]
\centering
\caption{Average treatment effect estimates. N/A\,=\, positivity violated (zero treated units in
Cluster~2). All income outcomes on the $\log_{1p}$ scale.}
\label{tab:ate}
\begin{tabular}{lcccccc}
\toprule
\textbf{Sample}
  & \textbf{$N_{1}$}
  & \textbf{$N_{0}$}
  & \textbf{ACE (log \texttt{re78})}\\
\midrule
Global       & 185 & 15{,}992 & $-1.7011$ \\
Cluster~1 (incomparable) & 0   & 13{,}047 & N/A \\
Cluster~2 (comparable)   & 185 &  2{,}945 & $+1.3998$ \\
\bottomrule
\end{tabular}
\end{table}
The ACE of $+1.3998$ lies entirely above zero and is directionally consistent with \citep{lalonde1986}'s NSW experimental benchmark of approximately $+\$1{,}794$ in 1978 dollars. CaSPECT recovered this result from purely observational data without seeing the experimental figures, without specifying a propensity score model, and without any researcher-imposed exclusion
criteria.

\subsection{IHDP Dataset}
The Infant Health and Development Program (IHDP) dataset is a well-established benchmark in causal inference, distinguished by the rare availability of individual-level ground-truth treatment effects.
The original study was a randomized controlled trial targeting low birth weight premature infants at eight US clinical sites. The intervention combined intensive childcare, home visiting, and family support services, with cognitive test scores at age three as the primary outcome. The version used here, NPCI replicate 1 from \citep{louizos2017causal}, is semi-synthetic. Real trial covariates and treatment assignments are paired with outcomes simulated from a nonlinear response surface, preserving the trial's heavy treatment imbalance of $138$ treated against $608$ controls while making both potential outcomes observable through \texttt{mu0} and \texttt{mu1}. The $25$ covariates cover neonatal clinical measurements, maternal characteristics, and site indicators.

\begin{figure}[h]
    \centering
    \includegraphics[width= 0.6\linewidth]{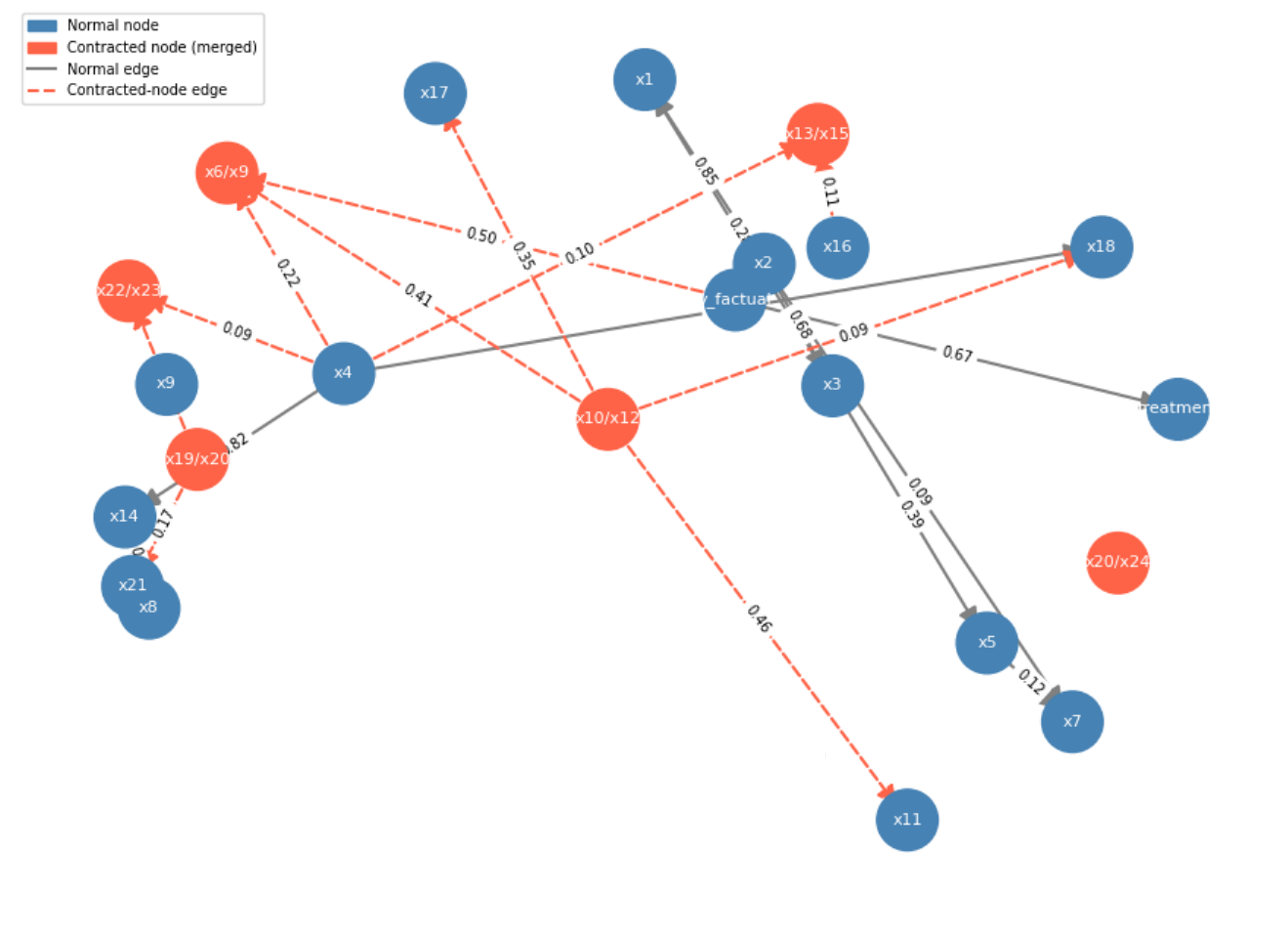}
    \caption{Estimated causal DAG for the IHDP dataset, with bootstrap-stability-weighted |ACE| edge weights, where multiple contracted nodes reflect unresolvable orientations among correlated child and family characteristic variables.}
    \label{fig:dag-ihdp}
\end{figure}

\begin{figure*}
    \centering
    \includegraphics[width= \linewidth]{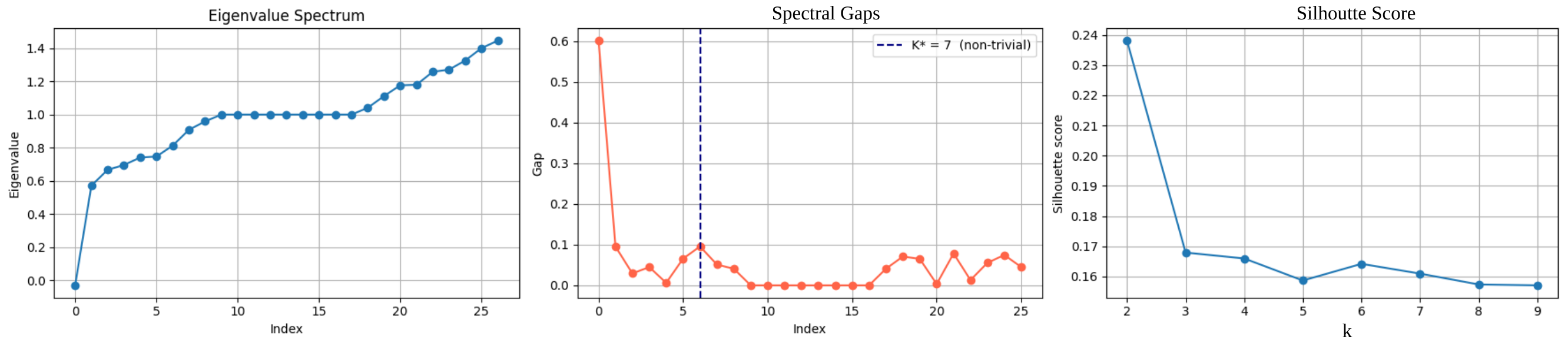}
    \caption{Eigenvalue spectrum of the Chung directed Laplacian (left), spectral gaps with the largest non-trivial gap at index $6$ selecting $K^{*} = 7$ embedding dimensions (centre), and Silhouette scores across $k \in \{2, …, 9\}$ with a global maximum at $k^{*} = 2$ and sharply declining separation beyond $k = 3$ (right), for the CaSPECT pipeline applied to the IHDP dataset.}
    \label{fig:spectral-ihdp}
\end{figure*}

\subsubsection{Causal Interpretation}
With only $138$ treated units among 746 observations, the spectral embedding placed virtually all treated infants into a single cluster. Cluster~1 contains $539$ observations, all controls. Cluster~2 contains $207$ observations; $138$ treated and $69$ controls. This is the pipeline correctly detecting that treated and control infants occupy different regions of the causal embedding space, consistent with the covariate imbalance inherent in the IHDP trial's site-level randomization design.

In Figure \ref{fig:spectral-ihdp}, the spectral decomposition selected $K^{*} = 7$ embedding dimensions, reflecting the complexity of a $25$-covariate directed acyclic graph (DAG) where many variables carry independent causal signal. Seven eigenvectors are required to represent this causal geometry adequately before clustering becomes meaningful. The silhouette criterion then selects $k^{*} = 2$. The t-SNE visualization in Figure \ref{fig:tsne-ihdp} shows two color communities with substantial within-cluster scatter, consistent with continuous variation in infant health and maternal characteristics rather than hard discrete boundaries.

The silhouette score peaks at $0.2381$ for $k = 2$ and drops sharply to around $0.16$ for all higher values, indicating that two groups represent the strongest separable structure in the embedding. The gap statistic increases continuously through $k = 9$ without plateauing, indicating that finer-grained heterogeneity genuinely exists but is continuous rather than discrete. Together, the two criteria
confirm $k = 2$ as the appropriate choice for inference while signaling that the significant variation within each cluster should not be ignored.

The causal DAG in Figure \ref{fig:dag-ihdp} is heavily fragmented. Seven contracted node pairs are visible; including \texttt{x6/x9}, \texttt{x10/x12}, \texttt{x13/x15}, \texttt{x19/x20}, and \texttt{x22/x23} and several nodes are fully isolated: \texttt{x12}, \texttt{x15}, \texttt{x24}, and \texttt{x25}. Without a domain order prior, the pipeline relied entirely on bootstrap stability and non-Gaussianity evidence across $27$ variables from only $746$ observations. Many edge bootstrap frequencies hovered near $0.5$ in both directions, pushing Orientation Validation Scores (OVS) below the resolution threshold
$\tau = 0.15$ and triggering contraction.

\begin{figure}[h]
    \centering
    \includegraphics[width= 0.5 \linewidth]{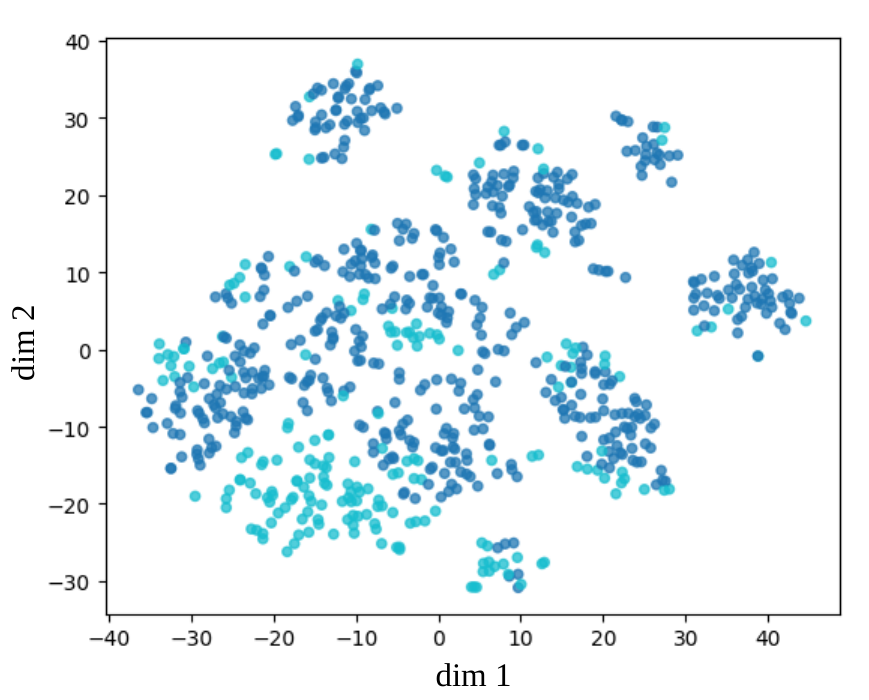}
    \caption{t-SNE visualization of CaSPECT cluster assignments $(k^{*} = 2, K^{*} = 7)$ on the IHDP dataset, where the substantial within-cluster interleaving of Cluster 1 (dark blue) and Cluster 2 (cyan) reflects the high-dimensional, weakly modular causal structure of the $27$-variable graph.}
    \label{fig:tsne-ihdp}
\end{figure}

\subsubsection{Discussions}
The observed ACE for Cluster~2 is $+0.3832$ on the Box-Cox scale ($t = 5.202$). The ground-truth PACE for this cluster is $+3.4544$ on the raw cognitive
score scale. The two figures are not comparable in magnitude across scales, but their signs and significance are consistent; the pipeline correctly identifies a positive, meaningful treatment effect in the only group with treatment.

The more telling comparison is between the true PACEs across clusters. Cluster~1 carries a ground-truth PACE of $+4.2326$ against Cluster~2's $+3.4544$, a gap of approximately $0.78$ raw score points. The direction of the substantive finding is that infants placed by the pipeline into the all-control cluster would have benefited more from the intervention than those in Cluster~2 had they been treated. Cluster~1 scores lower on \texttt{x6} ($-0.248$ versus $+0.638$ in Cluster~2), a neonatal health indicator where lower values reflect poorer baseline status. Higher-risk infants responding more strongly to an intensive early intervention is a well-established pattern in the developmental literature, and the ground-truth PACE difference is consistent with it. The treatment concentration in Cluster~1, 66.7\% treated against 0\% in Cluster~1, is striking in itself. The spectral embedding recovered the treated population as a distinct causal community without ever receiving treatment labels during the clustering step. The fact that baseline covariate structure alone was sufficient to separate treated from control infants reflects the imperfect overlap that has made IHDP a persistently difficult benchmark for propensity-score and matching-based methods.

The overlap violation in Cluster~1 is the binding limitation of this run. With no treated units present, any within-cluster effect estimate would require extrapolation well outside the support of the treated distribution. The pipeline flags this correctly rather than producing an unreliable figure. The harder point is that the infants with the larger expected treatment benefit are precisely those for whom the effect is unidentifiable from the observed data, the fundamental identification problem the IHDP dataset has always carried. The pipeline surfaces this explicitly rather than absorbing it into a pooled average, which is itself a more honest output than a single global estimate that conceals the problem.

\subsection{401(k) dataset}
The 401(k) dataset, drawn from the Survey of Income and Program Participation (SIPP) by \citep{poterba1995401}, contains $9{,}915$ American households observed on a binary treatment indicator for whether the primary earner's employer offered a 401(k)-eligible retirement savings plan, a continuous outcome measuring net total financial assets, and ten pre-treatment covariates: age, income, family size, education, marital status, two-earner household status, IRA participation, defined-benefit pension coverage, homeownership, and net IRA financial assets. Eligibility is not randomly assigned. It is correlated with employer size, industry, and compensation structure, all of which independently predict household wealth, so a naive comparison between eligible and ineligible households conflates genuine plan-access effects with pre-existing differences in income and savings propensity. \citep{chernozhukov2018double} used this dataset as the canonical empirical illustration of double machine learning and reported a global average treatment effect of approximately $\$8{,}000$ to $\$9{,}000$ on net financial assets, an estimate that has since become the standard benchmark for evaluating new causal inference procedures on this dataset. The present analysis applies CaSPECT to investigate whether this global figure conceals substantively meaningful heterogeneity across structurally distinct household subgroups and, if so, to identify the causal mechanisms responsible for that heterogeneity.

\begin{figure}
    \centering
    \includegraphics[width= 0.6 \linewidth]{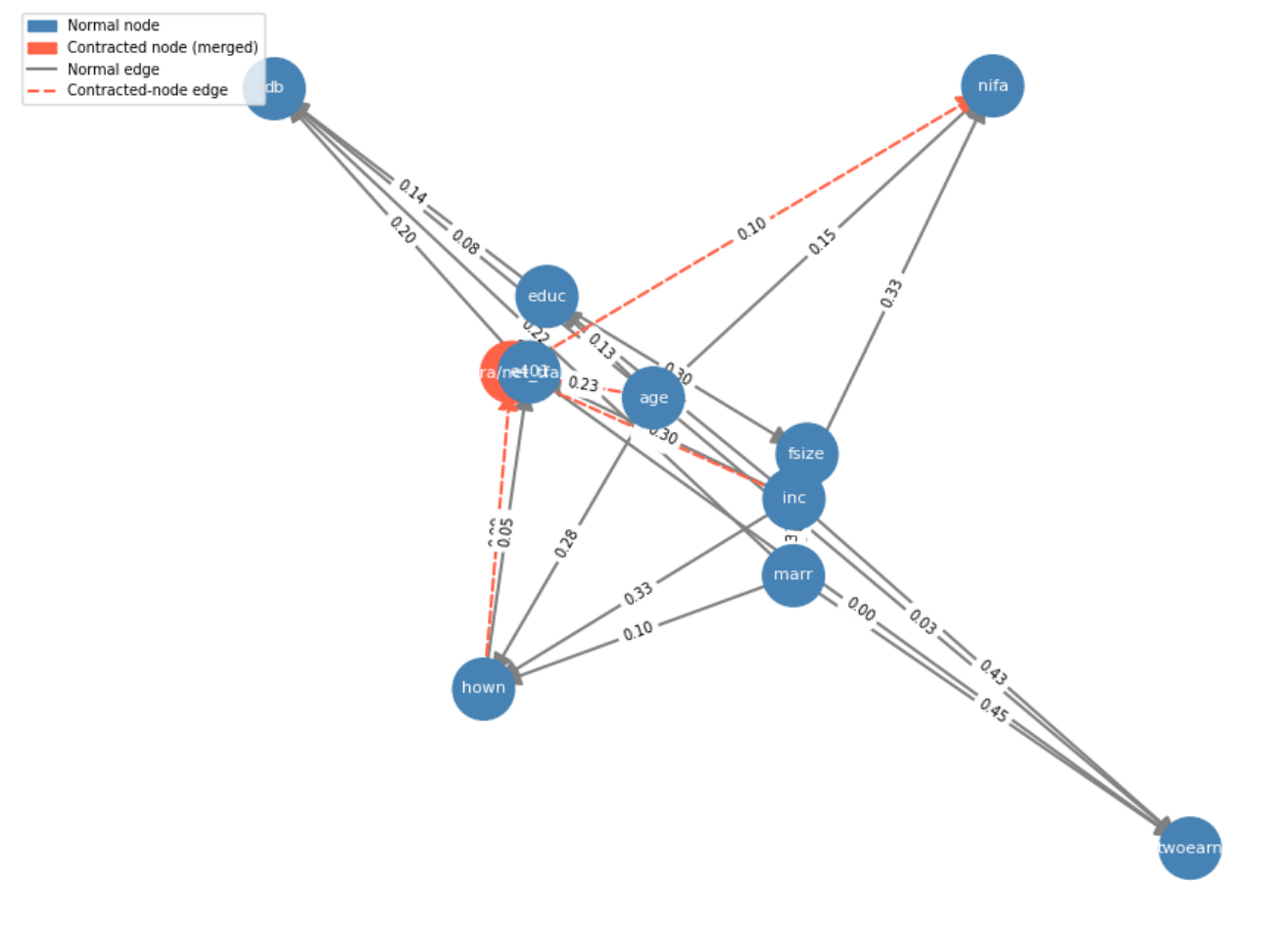}
    \caption{CaSPECT estimated causal DAG for the 401(k) dataset, with bootstrap-stability-weighted |ACE| edge weights, where the contracted node ra/ne100a (tomato) flags an unresolvable orientation between two pre-treatment asset variables, and the dominant pathways from age, education, and income converge on homeownership (hown), net IRA assets (nifa), and two-earner status (twoearn).}
    \label{fig:dag-401}
\end{figure}

\begin{figure*}
    \centering
    \includegraphics[width= \linewidth]{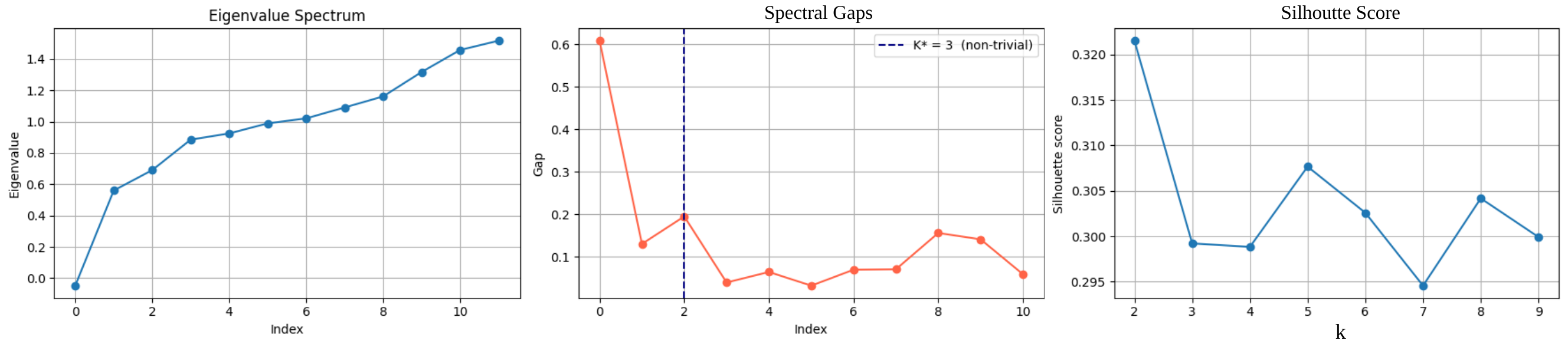}
    \caption{Eigenvalue spectrum of the Chung directed Laplacian (left), spectral gaps with the largest non-trivial gap at index $2$ selecting $K^{*} = 3$ embedding dimensions (centre), and Silhouette scores across $k \in \{2, …, 9\}$ with a global maximum at $k^{*} = 2$ (right), for the CaSPECT pipeline applied to the 401(k) dataset.}
    \label{fig:spectral_401k}
\end{figure*}

\subsubsection{Causal Interpretation}

The pipeline's global DML estimate of the average treatment effect, $\$8{,}795$ with a $95\%$ confidence interval of $[\$7{,}827,\ \$9{,}762],$ closely reproduces the benchmark result of Chernozhukov et al.\citep{chernozhukov2018double}. This replication is important because the outcome variable required a signed-log transformation prior to estimation due to the presence of substantial skewness, extreme positive outliers, and a large spike at zero. The successful recovery of the benchmark effect after back-transformation confirms that the transformation procedure did not materially distort the underlying causal signal. The standalone DML cross-check was validated before any cluster-level analysis was conducted, which is essential because all downstream inference depends on the reliability of this global estimate.

\begin{figure}
    \centering
    \includegraphics[width= 0.5\linewidth]{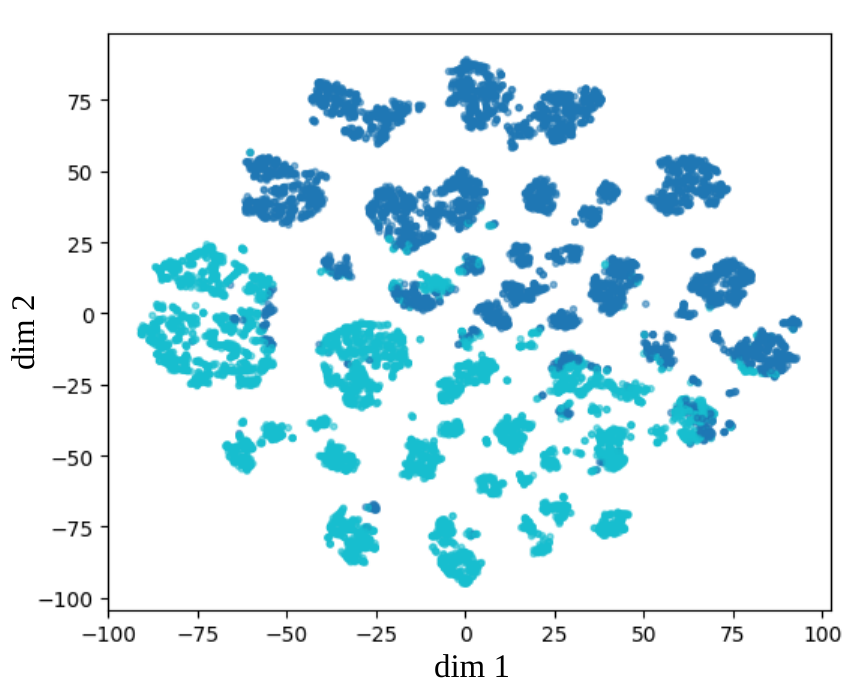}
    \caption{t-SNE visualization of CaSPECT cluster assignments $(k^{*} = 2, K^{*} = 3)$ on the 401(k) dataset, where the interleaved spatial distribution of Cluster 1 (dark blue) and Cluster 2 (cyan) confirms that the partition reflects causal-structural position in the estimated DAG.}
    \label{fig:tsne-401k}
\end{figure}

From Figure \ref{fig:spectral_401k}, the spectral decomposition selected $K^* = 3$ embedding dimensions and $k^* = 2$ clusters. Although this initially appears inconsistent, the distinction reflects two different structural properties of the estimated causal geometry. Three eigenvectors are required to represent the dominant variation in the DAG structure, corresponding broadly to income, household structure, and savings behaviour dimensions that emerge as distinct pathway bundles in the estimated graph. However, when observations are projected into this three-dimensional causal embedding, the data separate into only two statistically meaningful communities. The third eigenvector captures continuous within-cluster variation rather than defining an additional discrete subgroup. The silhouette score of $0.3215$ at $k=2$ is moderate relative to synthetic benchmark settings, but for real economic data, where group membership changes gradually rather than abruptly, a score greater than $0.30$ indicates recoverable latent structure rather than noise. The t-SNE visualization in Figure \ref{fig:tsne-401k} supports this interpretation by showing two broad but internally heterogeneous communities rather than sharply separated spherical clusters.

The estimated causal DAG recovered several economically plausible directed pathways. Age and education feed into income, income propagates to family size and marital status, and homeownership connects downstream to multiple household-level variables. These directions are broadly consistent with economic theory and with the domain-order prior supplied during the orientation stage of the CaSPECT pipeline.

From Figure \ref{fig:dag-401} it is evident that the principal structural limitation of the graph is the contraction of the variables \texttt{pira} and \texttt{e401} into a single merged node. This occurred because the Orientation Validation Score between them fell below the edge orientation threshold $\gamma = 0.15,$ meaning that across bootstrap resamples the evidence supporting the directions $\texttt{pira} \to \texttt{e401}$ and $\texttt{e401} \to \texttt{pira}$ was nearly balanced. Neither Meek's orientation rules nor the supplied domain prior could resolve the ambiguity. The contraction is algorithmically preferable to symmetrizing the edge and introducing a directed cycle, but its substantive consequence is important: the outcome variable \texttt{net\_tfa} becomes disconnected from the treatment node in the DAG. Consequently, the DAG's adjacency weights for the treatment-to-outcome pathway are absent from the graph. The cluster-level ACEs discussed below are therefore derived from the standalone DML estimation procedure, which operates outside the DAG structure, and remain valid despite the contraction. The two discovered clusters reveal a coherent economic narrative when their covariate profiles are interpreted together. Cluster~1, containing 5,089 households, is characterized by higher income
$(\$42{,}366 \text{ mean}),$ near-universal marriage $(98.6\%),$ high prevalence of two-earner households $(69.3\%),$ larger average family size $(3.7),$ strong homeownership rates $(71.2\%),$ and relatively low IRA participation $(19.0\%).$ Cluster~2, containing 4,826 households, exhibits a lower income $(\$31{,}754 \text{ mean}),$ predominantly single status $(20.3\% \text{ married}),$ minimal two-earner arrangements $(5.2\%),$ smaller household size $(1.96),$ and higher IRA participation $(29.7\%)$.

\subsubsection{Discussions}

The central substantive finding is that Cluster~2, which contains the lower-income, predominantly single group, exhibits a larger 401(k) eligibility effect $(\$9{,}873 \text{ DML ACE})$ than Cluster~1 $(\$7{,}884),$. At first glance, this appears counterintuitive because higher-income households would normally be expected to benefit more from tax-advantaged retirement vehicles due to both higher marginal tax rates and higher disposable income. The resolution of this apparent paradox lies not in income alone, but in the number of savings channels already available within the household.

Cluster~1 households are predominantly married dual-earner arrangements with relatively large family size. The second earner likely possesses access to an independent employer-sponsored retirement plan, and the household frequently already holds substantial non-retirement financial assets. The mean raw net total financial assets before treatment exceeded $\$10{,}054.$ When 401(k) eligibility arrives in this environment, it enters a household already operating multiple savings vehicles in parallel. The additional retirement plan is therefore more likely to induce substitution from existing savings channels rather than generating genuinely new wealth accumulation. The smaller estimated ACE in Cluster~1 is consistent with the savings-reshuffling interpretation. Cluster~2 households, by contrast, are predominantly single-earner and unmarried, with an average family size of below two. By construction, these households possess fewer parallel savings channels. In this setting, 401(k) eligibility represents a substantially larger expansion of the household's available savings infrastructure, while the opportunity for substitution from existing savings vehicles is correspondingly lower. The treatment effect is therefore more likely to represent genuine new wealth accumulation rather than reallocation across accounts. This explains why the estimated ACE is larger despite lower average income.

The IRA participation rates provide an additional layer of interpretation. Cluster~1 exhibits substantially higher IRA participation $(29.7\% \text{ versus } 19.0\%).$ Under a simple substitution framework, this would predict a smaller 401(k) effect because households already using IRAs should substitute savings into the employer-sponsored plan rather than increasing aggregate savings. Empirically, however, the opposite occurs. The most plausible explanation is selection. The IRA holders within Cluster~2 are unlikely to be representative of the cluster as a whole. Instead, they appear to constitute a subgroup of highly savings-motivated individuals who actively sought retirement savings vehicles even without employer encouragement. For these households, 401(k) eligibility acts as a complement rather than a substitute, increasing total savings across multiple vehicles simultaneously. The high-income outliers visible in the income distribution, single professionals earning above $\$150{,}000$ in the upper tail of Cluster~1, likely contribute substantially to this complementarity effect. The income distribution itself further reinforces the structural interpretation. Cluster~1 displays a higher median and a tighter interquartile range, consistent with the profile of stable dual-income households where earnings are both larger and more predictable. Cluster~1 exhibits a lower median but a substantially heavier upper tail extending toward
$
\$250{,}000.
$
This heavy upper tail explains why a non-trivial fraction of Cluster~2 households still belong to the top income quintile despite the cluster's lower average income. It also implies that the cluster-level DML estimate of
$\$9{,}873$ is partly driven by a relatively small group of high-income single professionals for whom the treatment effect is especially large. Accordingly, caution is warranted before generalizing the Cluster~2 estimate uniformly across all single-earner households.

\section{Conclusion and Future Works}\label{sec:conc}

This paper introduced CaSPECT, a causal spectral clustering framework that integrates causal discovery, causal effect estimation, and directed spectral embedding to identify causally homogeneous subgroups from observational data. By combining bootstrap-stabilized PC discovery with DirectLiNGAM through the proposed Orientation Validation Score (OVS), CaSPECT constructs a robust directed causal graph whose edges are weighted by backdoor-identified average causal effects and embedded using Chung's directed Laplacian. We establish consistency of the complete pipeline, including graph recovery, causal edge-weight estimation, and spectral embedding, under standard causal and semiparametric assumptions. Simulation studies demonstrated robustness to nonlinear causal mechanisms and mild violations of causal sufficiency, while the ablation study confirmed the contributions of OVS, bootstrap stability weighting, and edge contraction to reliable structural recovery. Applications to the LaLonde, IHDP, and 401(k) datasets showed that the proposed framework recovers interpretable causal subpopulations and reveals treatment-effect heterogeneity. These results establish a principled connection between causal inference and directed spectral graph theory, providing a new paradigm for subgroup discovery based on causal propagation pathways rather than covariate similarity. Future work will focus on extending CaSPECT to settings involving latent confounding, nonlinear causal discovery, uncertainty-aware spectral embeddings, temporal and dynamic causal graphs, and scalable, high-dimensional causal representations via graph neural networks, simplicial complexes, and topological learning frameworks.

\bibliography{references}


\begin{thebibliography}{30}
\ifx \bisbn   \undefined \def \bisbn  #1{ISBN #1}\fi
\ifx \binits  \undefined \def \binits#1{#1}\fi
\ifx \bauthor  \undefined \def \bauthor#1{#1}\fi
\ifx \batitle  \undefined \def \batitle#1{#1}\fi
\ifx \bjtitle  \undefined \def \bjtitle#1{#1}\fi
\ifx \bvolume  \undefined \def \bvolume#1{\textbf{#1}}\fi
\ifx \byear  \undefined \def \byear#1{#1}\fi
\ifx \bissue  \undefined \def \bissue#1{#1}\fi
\ifx \bfpage  \undefined \def \bfpage#1{#1}\fi
\ifx \blpage  \undefined \def \blpage #1{#1}\fi
\ifx \burl  \undefined \def \burl#1{\textsf{#1}}\fi
\ifx \doiurl  \undefined \def \doiurl#1{\url{https://doi.org/#1}}\fi
\ifx \betal  \undefined \def \betal{\textit{et al.}}\fi
\ifx \binstitute  \undefined \def \binstitute#1{#1}\fi
\ifx \binstitutionaled  \undefined \def \binstitutionaled#1{#1}\fi
\ifx \bctitle  \undefined \def \bctitle#1{#1}\fi
\ifx \beditor  \undefined \def \beditor#1{#1}\fi
\ifx \bpublisher  \undefined \def \bpublisher#1{#1}\fi
\ifx \bbtitle  \undefined \def \bbtitle#1{#1}\fi
\ifx \bedition  \undefined \def \bedition#1{#1}\fi
\ifx \bseriesno  \undefined \def \bseriesno#1{#1}\fi
\ifx \blocation  \undefined \def \blocation#1{#1}\fi
\ifx \bsertitle  \undefined \def \bsertitle#1{#1}\fi
\ifx \bsnm \undefined \def \bsnm#1{#1}\fi
\ifx \bsuffix \undefined \def \bsuffix#1{#1}\fi
\ifx \bparticle \undefined \def \bparticle#1{#1}\fi
\ifx \barticle \undefined \def \barticle#1{#1}\fi
\bibcommenthead
\ifx \bconfdate \undefined \def \bconfdate #1{#1}\fi
\ifx \botherref \undefined \def \botherref #1{#1}\fi
\ifx \url \undefined \def \url#1{\textsf{#1}}\fi
\ifx \bchapter \undefined \def \bchapter#1{#1}\fi
\ifx \bbook \undefined \def \bbook#1{#1}\fi
\ifx \bcomment \undefined \def \bcomment#1{#1}\fi
\ifx \oauthor \undefined \def \oauthor#1{#1}\fi
\ifx \citeauthoryear \undefined \def \citeauthoryear#1{#1}\fi
\ifx \endbibitem  \undefined \def \endbibitem {}\fi
\ifx \bconflocation  \undefined \def \bconflocation#1{#1}\fi
\ifx \arxivurl  \undefined \def \arxivurl#1{\textsf{#1}}\fi
\csname PreBibitemsHook\endcsname

\bibitem[\protect\citeauthoryear{Kim et~al.}{2024a}]{hierarchical}
\begin{barticle}
\bauthor{\bsnm{Kim}, \binits{K.}},
\bauthor{\bsnm{Kim}, \binits{J.}},
\bauthor{\bsnm{Wasserman}, \binits{L.}},
\bauthor{\bsnm{Kennedy}, \binits{E.}}:
\batitle{Hierarchical and density-based causal clustering}.
\bjtitle{Advances in Neural Information Processing Systems}
\bvolume{37},
\bfpage{30363}--\blpage{30393}
(\byear{2024})
\end{barticle}
\endbibitem

\bibitem[\protect\citeauthoryear{Kim et~al.}{2024b}]{causalkmeans}
\begin{botherref}
\oauthor{\bsnm{Kim}, \binits{K.}},
\oauthor{\bsnm{Kim}, \binits{J.}},
\oauthor{\bsnm{Kennedy}, \binits{E.H.}}:
Causal k-means clustering.
arXiv preprint arXiv:2405.03083
(2024)
\end{botherref}
\endbibitem

\bibitem[\protect\citeauthoryear{Von~Luxburg}{2007}]{spectral}
\begin{barticle}
\bauthor{\bsnm{Von~Luxburg}, \binits{U.}}:
\batitle{A tutorial on spectral clustering}.
\bjtitle{Statistics and computing}
\bvolume{17}(\bissue{4}),
\bfpage{395}--\blpage{416}
(\byear{2007})
\end{barticle}
\endbibitem

\bibitem[\protect\citeauthoryear{Hartigan and Wong}{1979}]{kmeans}
\begin{barticle}
\bauthor{\bsnm{Hartigan}, \binits{J.A.}},
\bauthor{\bsnm{Wong}, \binits{M.A.}}:
\batitle{Algorithm as 136: A k-means clustering algorithm}.
\bjtitle{Journal of the royal statistical society. series c (applied statistics)}
\bvolume{28}(\bissue{1}),
\bfpage{100}--\blpage{108}
(\byear{1979})
\end{barticle}
\endbibitem

\bibitem[\protect\citeauthoryear{Spirtes et~al.}{1993}]{pc_algo}
\begin{bchapter}
\bauthor{\bsnm{Spirtes}, \binits{P.}},
\bauthor{\bsnm{Glymour}, \binits{C.}},
\bauthor{\bsnm{Scheines}, \binits{R.}}:
\bctitle{Discovery algorithms for causally sufficient structures}.
In: \bbtitle{Causation, Prediction, and Search},
pp. \bfpage{103}--\blpage{162}.
\bpublisher{Springer}, \blocation{???}
(\byear{1993})
\end{bchapter}
\endbibitem

\bibitem[\protect\citeauthoryear{Neyman}{1923}]{Neyman:1923}
\begin{barticle}
\bauthor{\bsnm{Neyman}, \binits{J.}}:
\batitle{On the application of probability theory to agricultural experiments: Essay on principles (translated in statist. sci., 1990, 5, 465–472)}.
\bjtitle{Roczniki Nauk Rolniczych}
\bvolume{10},
\bfpage{1}--\blpage{51}
(\byear{1923})
\end{barticle}
\endbibitem

\bibitem[\protect\citeauthoryear{Rubin}{1974}]{Rubin:1974}
\begin{barticle}
\bauthor{\bsnm{Rubin}, \binits{D.B.}}:
\batitle{Estimating causal effects of treatments in randomized and nonrandomized studies}.
\bjtitle{J. Educ. Psychol.}
\bvolume{66},
\bfpage{688}--\blpage{701}
(\byear{1974})
\end{barticle}
\endbibitem

\bibitem[\protect\citeauthoryear{Spirtes et~al.}{2000}]{spirtes2000causation}
\begin{bbook}
\bauthor{\bsnm{Spirtes}, \binits{P.}},
\bauthor{\bsnm{Glymour}, \binits{C.N.}},
\bauthor{\bsnm{Scheines}, \binits{R.}}:
\bbtitle{Causation, Prediction, and Search}.
\bpublisher{MIT press}, \blocation{???}
(\byear{2000})
\end{bbook}
\endbibitem

\bibitem[\protect\citeauthoryear{Shimizu et~al.}{2006}]{shimizu2006linear}
\begin{botherref}
\oauthor{\bsnm{Shimizu}, \binits{S.}},
\oauthor{\bsnm{Hoyer}, \binits{P.O.}},
\oauthor{\bsnm{Hyv{\"a}rinen}, \binits{A.}},
\oauthor{\bsnm{Kerminen}, \binits{A.}},
\oauthor{\bsnm{Jordan}, \binits{M.}}:
A linear non-gaussian acyclic model for causal discovery.
Journal of Machine Learning Research
\textbf{7}(10)
(2006)
\end{botherref}
\endbibitem

\bibitem[\protect\citeauthoryear{Chernozhukov et~al.}{2018}]{chernozhukov2018double}
\begin{botherref}
\oauthor{\bsnm{Chernozhukov}, \binits{V.}},
\oauthor{\bsnm{Chetverikov}, \binits{D.}},
\oauthor{\bsnm{Demirer}, \binits{M.}},
\oauthor{\bsnm{Duflo}, \binits{E.}},
\oauthor{\bsnm{Hansen}, \binits{C.}},
\oauthor{\bsnm{Newey}, \binits{W.}},
\oauthor{\bsnm{Robins}, \binits{J.}}:
Double/debiased machine learning for treatment and structural parameters.
Oxford University Press Oxford, UK
(2018)
\end{botherref}
\endbibitem

\bibitem[\protect\citeauthoryear{McDonald}{2002}]{pearl}
\begin{barticle}
\bauthor{\bsnm{McDonald}, \binits{R.P.}}:
\batitle{Judea pearl. causality: Models, reasoning, and inference. cambridge: Cambridge university press. 384 pp., 2000, isbn 0521773628.}
\bjtitle{Psychometrika}
\bvolume{67}(\bissue{2}),
\bfpage{321}--\blpage{322}
(\byear{2002})
\end{barticle}
\endbibitem

\bibitem[\protect\citeauthoryear{Meek}{2013}]{meek2013causal}
\begin{botherref}
\oauthor{\bsnm{Meek}, \binits{C.}}:
Causal inference and causal explanation with background knowledge.
arXiv preprint arXiv:1302.4972
(2013)
\end{botherref}
\endbibitem

\bibitem[\protect\citeauthoryear{Meinshausen and B{\"u}hlmann}{2010}]{meinshausen2010stability}
\begin{barticle}
\bauthor{\bsnm{Meinshausen}, \binits{N.}},
\bauthor{\bsnm{B{\"u}hlmann}, \binits{P.}}:
\batitle{Stability selection}.
\bjtitle{Journal of the Royal Statistical Society Series B: Statistical Methodology}
\bvolume{72}(\bissue{4}),
\bfpage{417}--\blpage{473}
(\byear{2010})
\end{barticle}
\endbibitem

\bibitem[\protect\citeauthoryear{Shimizu et~al.}{2011}]{shimizu2011directlingam}
\begin{barticle}
\bauthor{\bsnm{Shimizu}, \binits{S.}},
\bauthor{\bsnm{Inazumi}, \binits{T.}},
\bauthor{\bsnm{Sogawa}, \binits{Y.}},
\bauthor{\bsnm{Hyvarinen}, \binits{A.}},
\bauthor{\bsnm{Kawahara}, \binits{Y.}},
\bauthor{\bsnm{Washio}, \binits{T.}},
\bauthor{\bsnm{Hoyer}, \binits{P.O.}},
\bauthor{\bsnm{Bollen}, \binits{K.}},
\bauthor{\bsnm{Hoyer}, \binits{P.}}:
\batitle{Directlingam: A direct method for learning a linear non-gaussian structural equation model}.
\bjtitle{Journal of Machine Learning Research-JMLR}
\bvolume{12}(\bissue{Apr}),
\bfpage{1225}--\blpage{1248}
(\byear{2011})
\end{barticle}
\endbibitem

\bibitem[\protect\citeauthoryear{Thadewald and B{\"u}ning}{2007}]{jarque}
\begin{barticle}
\bauthor{\bsnm{Thadewald}, \binits{T.}},
\bauthor{\bsnm{B{\"u}ning}, \binits{H.}}:
\batitle{Jarque--bera test and its competitors for testing normality--a power comparison}.
\bjtitle{Journal of applied statistics}
\bvolume{34}(\bissue{1}),
\bfpage{87}--\blpage{105}
(\byear{2007})
\end{barticle}
\endbibitem

\bibitem[\protect\citeauthoryear{Zheng et~al.}{2018}]{zheng2018dags}
\begin{botherref}
\oauthor{\bsnm{Zheng}, \binits{X.}},
\oauthor{\bsnm{Aragam}, \binits{B.}},
\oauthor{\bsnm{Ravikumar}, \binits{P.K.}},
\oauthor{\bsnm{Xing}, \binits{E.P.}}:
Dags with no tears: Continuous optimization for structure learning.
Advances in neural information processing systems
\textbf{31}
(2018)
\end{botherref}
\endbibitem

\bibitem[\protect\citeauthoryear{Hastie and Tibshirani}{1986}]{gam}
\begin{barticle}
\bauthor{\bsnm{Hastie}, \binits{T.}},
\bauthor{\bsnm{Tibshirani}, \binits{R.}}:
\batitle{Generalized additive models}.
\bjtitle{Statistical science}
\bvolume{1}(\bissue{3}),
\bfpage{297}--\blpage{310}
(\byear{1986})
\end{barticle}
\endbibitem

\bibitem[\protect\citeauthoryear{Breiman}{2001}]{random}
\begin{barticle}
\bauthor{\bsnm{Breiman}, \binits{L.}}:
\batitle{Random forests}.
\bjtitle{Machine learning}
\bvolume{45}(\bissue{1}),
\bfpage{5}--\blpage{32}
(\byear{2001})
\end{barticle}
\endbibitem

\bibitem[\protect\citeauthoryear{Page et~al.}{1999}]{page1999pagerank}
\begin{botherref}
\oauthor{\bsnm{Page}, \binits{L.}},
\oauthor{\bsnm{Brin}, \binits{S.}},
\oauthor{\bsnm{Motwani}, \binits{R.}},
\oauthor{\bsnm{Winograd}, \binits{T.}}:
The pagerank citation ranking: Bringing order to the web.
Technical report,
Stanford infolab
(1999)
\end{botherref}
\endbibitem

\bibitem[\protect\citeauthoryear{Chung}{2005}]{chung2005laplacians}
\begin{barticle}
\bauthor{\bsnm{Chung}, \binits{F.}}:
\batitle{Laplacians and the cheeger inequality for directed graphs}.
\bjtitle{Annals of Combinatorics}
\bvolume{9}(\bissue{1}),
\bfpage{1}--\blpage{19}
(\byear{2005})
\end{barticle}
\endbibitem

\bibitem[\protect\citeauthoryear{Chung}{1997}]{chung1997spectral}
\begin{bbook}
\bauthor{\bsnm{Chung}, \binits{F.R.}}:
\bbtitle{Spectral Graph Theory}
vol. \bseriesno{92}.
\bpublisher{American Mathematical Soc.}, \blocation{???}
(\byear{1997})
\end{bbook}
\endbibitem

\bibitem[\protect\citeauthoryear{Davis and Kahan}{1970}]{davis1970rotation}
\begin{barticle}
\bauthor{\bsnm{Davis}, \binits{C.}},
\bauthor{\bsnm{Kahan}, \binits{W.M.}}:
\batitle{The rotation of eigenvectors by a perturbation. iii}.
\bjtitle{SIAM Journal on Numerical Analysis}
\bvolume{7}(\bissue{1}),
\bfpage{1}--\blpage{46}
(\byear{1970})
\end{barticle}
\endbibitem

\bibitem[\protect\citeauthoryear{Peter and ROUSSEEUW}{1987}]{peter1987graphical}
\begin{barticle}
\bauthor{\bsnm{Peter}, \binits{J.}},
\bauthor{\bsnm{ROUSSEEUW}, \binits{S.}}:
\batitle{A graphical aid to the interpretation and validation of cluster analysis}.
\bjtitle{J. Comput. Appl. Math}
\bvolume{20},
\bfpage{53}--\blpage{65}
(\byear{1987})
\end{barticle}
\endbibitem

\bibitem[\protect\citeauthoryear{Tibshirani et~al.}{2001}]{tibshirani2001estimating}
\begin{barticle}
\bauthor{\bsnm{Tibshirani}, \binits{R.}},
\bauthor{\bsnm{Walther}, \binits{G.}},
\bauthor{\bsnm{Hastie}, \binits{T.}}:
\batitle{Estimating the number of clusters in a data set via the gap statistic}.
\bjtitle{Journal of the royal statistical society: series b (statistical methodology)}
\bvolume{63}(\bissue{2}),
\bfpage{411}--\blpage{423}
(\byear{2001})
\end{barticle}
\endbibitem

\bibitem[\protect\citeauthoryear{Santos and Embrechts}{2009}]{ari}
\begin{bchapter}
\bauthor{\bsnm{Santos}, \binits{J.M.}},
\bauthor{\bsnm{Embrechts}, \binits{M.}}:
\bctitle{On the use of the adjusted rand index as a metric for evaluating supervised classification}.
In: \bbtitle{International Conference on Artificial Neural Networks},
pp. \bfpage{175}--\blpage{184}
(\byear{2009}).
\bcomment{Springer}
\end{bchapter}
\endbibitem

\bibitem[\protect\citeauthoryear{LaLonde}{1986}]{lalonde1986}
\begin{botherref}
\oauthor{\bsnm{LaLonde}, \binits{R.J.}}:
Evaluating the econometric evaluations of training programs with experimental data.
The American economic review,
604--620
(1986)
\end{botherref}
\endbibitem

\bibitem[\protect\citeauthoryear{Louizos et~al.}{2017}]{louizos2017causal}
\begin{botherref}
\oauthor{\bsnm{Louizos}, \binits{C.}},
\oauthor{\bsnm{Shalit}, \binits{U.}},
\oauthor{\bsnm{Mooij}, \binits{J.M.}},
\oauthor{\bsnm{Sontag}, \binits{D.}},
\oauthor{\bsnm{Zemel}, \binits{R.}},
\oauthor{\bsnm{Welling}, \binits{M.}}:
Causal effect inference with deep latent-variable models.
Advances in neural information processing systems
\textbf{30}
(2017)
\end{botherref}
\endbibitem

\bibitem[\protect\citeauthoryear{Poterba et~al.}{1995}]{poterba1995401}
\begin{barticle}
\bauthor{\bsnm{Poterba}, \binits{J.M.}},
\bauthor{\bsnm{Venti}, \binits{S.F.}},
\bauthor{\bsnm{Wise}, \binits{D.A.}}:
\batitle{Do 401 (k) contributions crowd out other personal saving?}
\bjtitle{Journal of Public Economics}
\bvolume{58}(\bissue{1}),
\bfpage{1}--\blpage{32}
(\byear{1995})
\end{barticle}
\endbibitem

\bibitem[\protect\citeauthoryear{Slutsky}{1925}]{slutsky}
\begin{barticle}
\bauthor{\bsnm{Slutsky}, \binits{E.}}:
\batitle{{\"U}ber stochastische asymptoten und grenzwerte}.
\bjtitle{Metron}
\bvolume{5}(\bissue{3}),
\bfpage{3}--\blpage{89}
(\byear{1925})
\end{barticle}
\endbibitem

\bibitem[\protect\citeauthoryear{Kim and Pollard}{1990}]{cmt}
\begin{botherref}
\oauthor{\bsnm{Kim}, \binits{J.}},
\oauthor{\bsnm{Pollard}, \binits{D.}}:
Cube root asymptotics.
The Annals of Statistics,
191--219
(1990)
\end{botherref}
\endbibitem

\end{thebibliography}

\appendix

\section{Appendix}
\subsection{Proofs of Theorems and Propositions.} 
\vspace*{12pt}

\subsubsection{Proof of the Proposition \ref{prop:ovs}} \label{prop:proof}

\begin{proof}
We partition the true edges of the underlying DAG into two disjoint sets: edges structurally identifiable by the constraint-based skeleton and edges trapped within a Markov Equivalence Class (MEC). By the consistency of the PC algorithm \citep{pc_algo}, for identifiable edges, the bootstrap orientation frequencies converge almost surely such that $\delta^{\text{PC}}_{uv} \to \pm 1$. Conversely, for edges within the MEC, the constraint-based orientation is uninformative, yielding $\delta^{\text{PC}}_{uv} \to c$, where $c \in (-1, 1)$ (typically $c \to 0$).Concurrently, by the consistency of DirectLiNGAM under non-Gaussian structural equations in Assumption \ref{ass:B3} \citep{shimizu2011directlingam}, $\delta^L_{uv} \to \pm 1$ for all true edges.We evaluate the asymptotic behavior of $\text{OVS}_{uv} = w_{\text{PC}} \delta^{\text{PC}}_{uv} + w_L \delta^L_{uv}$ under these two cases. Since, in the asymptotic limit $f_{uv} \rightarrow 1$ by stability selection \citep{meinshausen2010stability}.

\noindent
Case 1 (Identifiable Edges): Both the constraint-based and continuous optimization components converge to the correct sign. The score converges to $\text{OVS}_{uv} \to \pm(w_{\text{PC}} + w_L) = \pm 1$. Since $1 > \gamma$, the edge is consistently oriented.

\noindent
Case 2 (MEC Edges): The PC component is ambiguous ($\delta^{\text{PC}}_{uv} \to c$), but the LiNGAM component converges to the true causal sign. The score converges to $w_{\text{PC}}(c) \pm w_L$. In the worst-case scenario, where the PC algorithm provides zero directional information ($c=0$), the score converges exactly to $\pm w_L$. By our explicit pipeline, $w_L > \gamma$. Therefore, $|\text{OVS}_{uv}| > \gamma$ is strictly guaranteed even for MEC edges. Because the continuous optimization component governs the sign, the OVS reliably escapes the ambiguity threshold and assigns the correct orientation, ensuring consistent asymptotic recovery of the true DAG.
\end{proof}

\subsubsection{Proof of the Theorem \ref{thm:ate}} \label{thm:proof}

\begin{proof}
The proof has three parts, corresponding to Track~OLS, Track~DML, and the stability weighting argument.
 
\medskip
\noindent\textbf{Part I: Linear Edges corresponding to OLS.}
 
\medskip
The RESET test does not reject linearity, so we work with the structural
equation
\[
  v = \beta_{uv}\,u + \bm{\beta}_X^\top\X_{uv} + \varepsilon_v,
  \qquad
  \bbE[\varepsilon_v \mid u,\X_{uv}] = 0.
\]
The error $\varepsilon_v$ captures all variation in $v$ not linearly
explained by $u$ and $\X_{uv}$.
The conditional moment restriction $\bbE[\varepsilon_v \mid u, \X_{uv}] = 0$
is the structural assumption that makes the coefficient $\beta_{uv}$
interpretable as a causal quantity: it says the structural noise is
orthogonal to both the treatment and the adjusting covariates, which is
precisely what separates a causal parameter from a predictive one.
 
Under consistency \citep{Neyman:1923, Rubin:1974}, the potential outcome of unit $i$ under the intervention $\mathrm{do}(u = a)$ equals the value the structural equation assigns when $u$ is set to $a$.
Substituting $u = a$ into the structural equation gives,
\[
  v(a) = \beta_{uv}\,a + \bm{\beta}_X^\top\X_{uv} + \varepsilon_v.
\]
This step rests on consistency in the sense of
\citep{Neyman:1923} and \citep{Rubin:1974}: the potential outcome $v(a)$
equals the observed outcome whenever unit $i$ actually receives treatment
$u = a$.
SUTVA requires that there is no interference between units and that the treatment is well-defined. The covariates $\X_{uv}$ and the error $\varepsilon_v$ are held at their
natural values; the intervention changes $u$ alone, not the background variables or the noise.
This is the structural reading of the equation: it describes a mechanism, not a correlation.

\noindent
Taking unconditional expectations on both sides gives
\[
  \bbE[v(a)] = \beta_{uv}\,a + \bm{\beta}_X^\top\bbE[\X_{uv}].
\]
The linearity of the structural equation means $a$ enters through
$\beta_{uv}$ alone.
The term $\bm{\beta}_X^\top\bbE[\X_{uv}]$ is constant in $a$ because
$\X_{uv}$ consists of pre-treatment covariates that the intervention on $u$ does not affect; this is guaranteed by the backdoor criterion, which restricts $\X_{uv}$ to non-descendants of $u$. Contrasting $a = 1$ and $a = 0$ therefore gives
\[
  \tau_{u \to v}
  = \bbE[v(1) - v(0)]
  = \beta_{uv}.
\]
The ACE equals the structural coefficient $\beta_{uv}$.
The backdoor criterion blocks every confounding path from $u$ to $v$
through $\X_{uv}$, so the structural coefficient and the regression
coefficient in the adjusted regression coincide at the population level.
 
It remains to show that the population OLS coefficient on $u$ in the
regression of $v$ on $(u, \X_{uv})$ equals $\beta_{uv}$ and not some
other quantity.
The population OLS minimises
$\bbE[(v - b_u u - \mathbf{b}_X^\top\X_{uv})^2]$ over
$(b_u, \mathbf{b}_X)$.
The first-order condition with respect to $b_u$ requires
\[
  \bbE\!\left[
    \bigl(v - \beta_{uv}\,u - \bm{\beta}_X^\top\X_{uv}\bigr)\cdot u
  \right] = 0.
\]
Substituting the structural equation for $v$ reduces this to
$\bbE[\varepsilon_v \cdot u] = 0$.
This is not assumed --- it is derived.
By the law of iterated expectations,
\[
  \bbE[\varepsilon_v \cdot u]
  = \bbE\!\left[\bbE[\varepsilon_v \cdot u \mid u,\X_{uv}]\right]
  = \bbE\!\left[u \cdot \bbE[\varepsilon_v \mid u,\X_{uv}]\right]
  = \bbE[u \cdot 0]
  = 0.
\]
The second equality pulls $u$ outside the inner expectation because $u$
is measurable with respect to the conditioning sigma-field
$(u, \X_{uv})$.
The third equality applies the conditional moment restriction.
The same argument applies to each component of $\X_{uv}$, confirming
that $(\beta_{uv}, \bm{\beta}_X)$ solves the full normal equations.
 
This step is non-trivial in the observational setting.
Without adjustment for $\X_{uv}$, omitted confounders would enter the
residual and create a non-zero correlation with $u$, so the marginal
regression of $v$ on $u$ would not recover $\beta_{uv}$.
The backdoor adjustment set removes exactly this confounding, and the
above argument shows that once $\X_{uv}$ is included the residual is
orthogonal to $u$ in the population.

The sample OLS estimator $\hat{\beta}_{uv}$ solves the sample analog of the normal equations.
Under standard OLS regularity conditions --- finite second moments of
$(v, u, \X_{uv})$ and a non-singular population second moment matrix, the sample normal equations converge to their population counterparts uniformly in probability by the law of large numbers. By Slutsky's theorem \citep{slutsky} and the continuous mapping theorem \citep{cmt},
$\hat{\beta}_{uv} \xrightarrow{a.s.} \beta_{uv} = \tau_{u \to v}$,
and therefore $|\hat{\beta}_{uv}| \xrightarrow{a.s.} |\beta_{uv}|$.
 
\medskip
\noindent\textbf{Part II: Nonlinear Edges corresponding to Double Machine Learning.}

The RESET test rejects linearity, so the linear SEM assumption is not credible for this edge. We instead work with the partially linear model
\[
  v = \tau_{u \to v}\cdot u + g(\X_{uv}) + \xi_v,
  \qquad
  \bbE[\xi_v \mid u,\X_{uv}] = 0,
\]
where $g\colon\bbR^{|\X_{uv}|} \to \bbR$ is an unknown smooth function
capturing the nonlinear effect of the covariates on $v$, and the
treatment effect $\tau_{u \to v}$ is a constant structural parameter.
All nonlinearity is absorbed into $g$; the parameter of interest enters linearly. If one were to regress $v$ on $u$ and $\X_{uv}$ using OLS in this
nonlinear setting, the misspecification of $g$ as a linear function
would induce bias in the coefficient on $u$.
The OLS slope would absorb part of the nonlinear confounding effect of
$\X_{uv}$ on $v$, producing an estimate that does not equal
$\tau_{u \to v}$ at the population level.
This is the failure mode Track~B is designed to avoid.
 
The DML approach recovers $\tau_{u \to v}$ by projecting out the influence of $\X_{uv}$ from both $v$ and $u$ before regressing. Define the population conditional mean functions
\[
  m(\X_{uv}) = \bbE[v \mid \X_{uv}],
  \qquad
  \ell(\X_{uv}) = \bbE[u \mid \X_{uv}],
\]
and the corresponding population residuals
\[
  \tilde{V} = v - m(\X_{uv}),
  \qquad
  \tilde{U} = u - \ell(\X_{uv}).
\]
Substituting the partially linear model into the expression for
$\tilde{V}$ gives,
\[
  \tilde{V}
  = \tau_{u \to v}(u - \ell(\X_{uv}))
    + (g(\X_{uv}) - m(\X_{uv}))
    + \xi_v.
\]
Observe that $m(\X_{uv}) = \bbE[v \mid \X_{uv}]
= \tau_{u \to v}\,\ell(\X_{uv}) + g(\X_{uv})$,
so the term $g(\X_{uv}) - m(\X_{uv})
= -\tau_{u \to v}\,\ell(\X_{uv})$ cancels with the residualized
treatment contribution, leaving
\[
  \tilde{V} = \tau_{u \to v}\,\tilde{U} + \xi_v.
\]
The unknown nonlinear function $g$ has been eliminated entirely.
The population regression of $\tilde{V}$ on $\tilde{U}$ therefore
recovers $\tau_{u \to v}$ without bias from the nonlinearity in $g$.
 
For the partialled-out regression to be unbiased, we need
$\bbE[\xi_v\,\tilde{U}] = 0$.
Since $\tilde{U} = u - \ell(\X_{uv})$ is measurable with respect to
$(u, \X_{uv})$, the law of iterated expectations gives
\[
  \bbE[\xi_v\,\tilde{U}]
  = \bbE\!\left[\tilde{U}\cdot\bbE[\xi_v \mid u,\X_{uv}]\right]
  = \bbE[\tilde{U}\cdot 0]
  = 0,
\]
where the second equality uses $\bbE[\xi_v \mid u,\X_{uv}] = 0$, which
follows from conditional ignorability given $\X_{uv}$.
This is the Neyman orthogonality condition
\citep{chernozhukov2018double}: it ensures that small perturbations in
the nuisance functions $m$ and $\ell$ do not produce first-order bias in
the estimate of $\tau_{u \to v}$.
 
In practice, $m$ and $\ell$ are unknown and must be estimated from the
same data used to estimate $\tau_{u \to v}$.
Using the same observations for both nuisance estimation and treatment
effect estimation would induce in-sample overfitting, creating a
correlation between the estimated residuals and the true residuals that
biases the final estimator.
Cross-fitting prevents this by partitioning the sample into $K$ folds
and estimating nuisances on the complement of each fold before computing
residuals on that fold.
The cross-fitted residuals $\tilde{U}_i = u_i - \hat{\ell}^{(-k)}(\X_{uv,i})$
and $\tilde{V}_i = v_i - \hat{m}^{(-k)}(\X_{uv,i})$ are therefore
constructed from estimators that have not seen observation $i$, so the
product $\tilde{U}_i\tilde{V}_i$ behaves as if the nuisance functions
were known at the population level, up to first order.

\noindent
The DML estimator can be written as
\[
  \hat{\tau}_{\mathrm{DML}}
  = \tau_{u \to v}
    + \frac{\sum_i \tilde{U}_i\,\xi_i}{\sum_i \tilde{U}_i^2}
    + R_n,
\]
where $R_n$ collects terms arising from nuisance estimation error.
The second term converges to zero in probability by the law of large
numbers and the orthogonality condition $\bbE[\xi_v\,\tilde{U}] = 0$.
Under Neyman orthogonality, the remainder $R_n$ is second-order in the
estimation errors of $\hat{m}$ and $\hat{\ell}$.
\citep{chernozhukov2018double} (Theorem~3.1) shows that if the nuisance estimators converge at a rate faster than $n^{-1/4}$ in $L_2$ norm, a condition satisfied by GAM and random forest estimators under standard smoothness assumptions, then $R_n = \mathcal{O}(n^{-1/2})$.
It follows that
\[
  \sqrt{n}(\hat{\tau}_{\mathrm{DML}} - \tau_{u \to v})
  \xrightarrow{d} \mathcal{N}(0,\sigma^2)
\]
for a finite variance $\sigma^2$, and in particular
$\hat{\tau}_{\mathrm{DML}} \xrightarrow{a.s.} \tau_{u \to v}$,
so $|\hat{\tau}_{\mathrm{DML}}| \xrightarrow{a.s.} |\tau_{u \to v}|$.
 
\medskip
\noindent\textbf{Part III: Bootstrap Stability Weighting.}
 
The inclusion frequency $f_{uv} = B^{-1}\sum_{b=1}^B
\mathbf{1}[(u,v) \in \text{skeleton of } G_\mathrm{PC}^{(b)}]$
counts the fraction of bootstrap resamples in which edge $(u,v)$
appears.
Under faithfulness and causal sufficiency, PC Algorithm consistency
\citep{spirtes2000causation} implies that the probability of including
a true edge in the skeleton converges to~1 as $n \to \infty$.
Applying the law of large numbers across bootstrap resamples therefore gives $f_{uv} \xrightarrow{a.s.} \mathbf{1}_{(u,v) \in E_0}$. In particular, for every true edge $f_{uv} \to 1$ almost surely, and
for every false edge $f_{uv} \to 0$ almost surely. For a true edge $(u \to v) \in E_0$, Parts~I or~II give
$A_{uv} \xrightarrow{a.s.} |\tau_{u \to v}^0|$.
Since $f_{uv} \to 1$ almost surely, the continuous mapping theorem \citep{cmt} applied to the product of two almost-surely convergent sequences gives
\[
  A_{uv}^{\mathrm{stab}}
  = f_{uv} \cdot A_{uv}
  \;\xrightarrow{a.s.}\;
  1 \cdot |\tau_{u \to v}^0|
  = |\tau_{u \to v}^0|.
\]
For a false edge $(u,v) \notin E_0$, faithfulness and causal sufficiency
ensure $f_{uv} \to 0$ almost surely.
Under the finite-moment conditions of Tracks~OLS and~DML, $A_{uv}$ remains
bounded, so
\[
  A_{uv}^{\mathrm{stab}} = f_{uv} \cdot A_{uv} \to 0 \cdot A_{uv} = 0
  \quad \text{almost surely.}
\]
Spurious edges that appear in bootstrap resamples due to finite-sample variations are therefore suppressed in the adjacency matrix without any analyst-specified hard threshold.
The stability weighting acts as a data-driven regulariser.
 
\medskip
Combining Parts~I, II, and~III, $A_{uv}^{\mathrm{stab}}$ converges
almost surely to $|\tau_{u \to v}^0|$ for every directed edge
$(u \to v) \in E^*$ as $n \to \infty$.
\end{proof}

\subsection{Consistency of CaSPECT} \label{thm:consistency}
We now show the consistency of our proposed method, CaSPECT. 
\begin{theorem} \label{thm:main}
Under Assumptions~\ref{ass:B1}--\ref{ass:B5}, as $n \to \infty$:
\begin{enumerate}
  \item[\emph{(a)}] $\calS \xrightarrow{a.s.} E_0$, by PC consistency
        \citep{spirtes2000causation} and convergence of bootstrap
        frequencies.
  \item[\emph{(b)}] OVS orientations converge to those of $G_0$ a.s.,
        by Proposition~\ref{prop:ovs}.
  \item[\emph{(c)}] $A_{uv}^{\mathrm{stab}} \xrightarrow{a.s.}
        \abs{\tau_{u \to v}^0}$, by OLS/DML consistency (Theorem~\ref{thm:ate}) and $f_{uv} \to 1$ from part (a).
  \item[\emph{(d)}] $L \xrightarrow{a.s.} L_0$.
  \item[\emph{(e)}] $\xt$ converges to the true causal spectral embedding up to rotation within the eigenspace.
\end{enumerate}
\end{theorem}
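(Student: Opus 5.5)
The proof chains the five parts in order. Each stage is a deterministic, locally continuous map of the previous stage's output. We therefore work on a single almost-sure event $\Omega_0$ of probability one. On it, all the upstream limits hold simultaneously, and the discrete structural estimates are eventually constant. That last point is what lets us treat the downstream continuous maps as fixed functions.

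\textbf{(a)} Under Assumptions~\ref{ass:B1}--\ref{ass:B2}, PC recovers the true CPDAG, hence the skeleton $E_0$, with probability tending to one, and the same holds on each bootstrap resample. Part III of the proof of Theorem~\ref{thm:ate} then gives $f_{uv}\to \mathbf{1}_{(u,v)\in E_0}$ almost surely. For any threshold $\theta\in(0,1)$, eventually $f_{uv}\ge\theta$ exactly for true edges. Since there are finitely many pairs, $\calS=E_0$ for all large $n$.

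\textbf{(b)} Proposition~\ref{prop:ovs} together with $w_L>\gamma$ gives that every true edge eventually has $|\OVS_{uv}|>\gamma$ with the correct sign. Consequently the resolution hierarchy and the contraction step are never triggered, $\V^*=\V$, and $\hat G^*=G_0$ eventually. In other words, the event $\mathcal{A}_n$ holds for all large $n$.

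\textbf{(c)} On this event, each backdoor set $\X_{uv}$ read from $\hat G^*$ coincides with the population one. Theorem~\ref{thm:ate} then applies edgewise and gives $\A^{\mathrm{stab}}\to\A_0$ entrywise almost surely, with zero entries for non-edges by Corollary~\ref{cor1}.

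\textbf{(d)} We view $L$ as a composite map $\A\mapsto D^{-1}\A\mapsto P\mapsto\pi\mapsto P^*\mapsto L$ and check continuity at $\A_0$ step by step.
\begin{itemize}
\item Row normalization is continuous wherever every row sum of $\A_0$ is positive. For sink nodes (e.g.\ $Y$), whose rows are zero in a DAG, we adopt the convention that the row is replaced by the uniform teleport. This convention is itself continuous provided it is applied identically to $\hat{\A}$ and $\A_0$, which is legitimate because the zero pattern is eventually exact.
\item Because $\alpha>0$, $P$ is strictly positive. Perron--Frobenius then makes $\pi$ a simple eigenvector, and $\pi$ depends analytically on $P$, since it solves a nonsingular linear system $\pi^\top(I-P+\mathbf{1}\mathbf{1}^\top)=\mathbf{1}^\top$.
\item Since $\pi_i>0$, the maps to $P^*$ and to $L$ are continuous. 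The continuous mapping theorem then yields $\hat L\to L_0$ almost surely, and the rate $\|E\|_F=\mathcal{O}(|\V^*|n^{-1/2})$ follows from local Lipschitz continuity.
\end{itemize}

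\textbf{(e)} Apply Davis--Kahan as in \eqref{eq:davis-kahan}. Assuming the population eigengap $\delta_{K^*}>0$, there is an orthogonal $O_n$ with $\|\hat{\mathbf V}_{K^*}-\Vk O_n\|_F\le\sqrt2\,\|E\|_F/\delta_{K^*}\to0$. Weyl's inequality ensures that the estimated argmax gap index eventually equals the population $K^*$, provided the maximal gap of $L_0$ is unique; we add this as a mild condition. For the embedding itself, each row $\tilde{\mathbf x}_i=\mathbf x^*_i\hat{\mathbf V}_{K^*}$ satisfies $\|\tilde{\mathbf x}_i-\mathbf x^*_i\Vk O_n\|\le\|\mathbf x^*_i\|\,\|\hat{\mathbf V}_{K^*}-\Vk O_n\|$, and the standardized rows have bounded average norm. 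Hence $\frac1n\|\xt-\X^*\Vk O_n\|_F^2\to0$. This is the precise meaning of convergence ``up to rotation within the eigenspace.''

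The main obstacle is step (d), not the spectral step. Two issues must be handled there: the continuity of row normalization at the zero rows forced by sinks, and the need for both $\pi$ and $P^*$ to be defined and continuous. Positive teleportation resolves the second issue; the first requires fixing the sink convention as above. A secondary difficulty is that (a)--(b) rely on almost-sure statements for procedures, PC and DirectLiNGAM, whose consistency is usually stated in probability. Our first attempt would choose CI significance levels $\alpha_{\mathrm{CI}}=\alpha_n\to0$ summably, so that Borel--Cantelli upgrades the in-probability statements to eventual exactness. Failing that, we would state (a)--(b) in probability and carry the remaining parts along subsequences.
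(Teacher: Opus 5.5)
Your proposal is correct and follows the paper's proof essentially step for step: PC consistency plus convergence of the bootstrap frequencies for (a), Proposition~\ref{prop:ovs} for (b), Theorem~\ref{thm:ate} for (c), continuity of the composite map $\mathbf{A}^{\mathrm{stab}}\mapsto P\mapsto\pi\mapsto P^*\mapsto L$ with the continuous mapping theorem for (d), and Davis--Kahan for (e). Your additions tighten points the paper passes over without changing the route: working on a single almost-sure event on which $\calS$ and $\hat G^*$ are eventually exact, the sink-row convention that makes $D^{-1}$ well defined, the nonsingular linear system for $\pi$, Weyl's inequality to stabilise the selected $K^*$, and a Borel--Cantelli upgrade of the in-probability PC and LiNGAM consistency.
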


\begin{proof}

\textbf{(a)} Let,
\[
\mathcal{S}
=
\left\{
(u,v): f_{uv}\ge \theta
\right\}
\]
is built from bootstrap inclusion frequencies
\[
f_{uv}
=
\frac{1}{B}
\sum_{b=1}^{B}
\mathbf{1}
\!\left[
(u,v)\in
\text{skeleton of }G_{\mathrm{PC}}^{(b)}
\right].
\]

Under Assumptions~\ref{ass:B1} and \ref{ass:B2}, the PC algorithm consistently recovers the true CPDAG when the significance level satisfies
\[
\alpha_n \to 0,
\qquad
n\alpha_n \to \infty .
\]
Consequently,
\[
\mathbb{P}
\!\left[
(u,v)\in
\text{skeleton of }
G_{\mathrm{PC}}^{(b)}
\right]
\longrightarrow
\mathbf{1}_{(u,v)\in E_0}
\]
for every node pair $(u,v)$.

For fixed $n$, the frequencies $f_{uv}$ are averages of Bernoulli variables across bootstrap resamples. By the Strong Law of Large Numbers,
\[
f_{uv}
\xrightarrow{a.s.}
\mathbb{P}
\!\left[
(u,v)\in
\text{skeleton of }
G_{\mathrm{PC}}^{(b)}
\right]
\qquad
(B\to\infty).
\]
Combining the bootstrap convergence with the consistency of PC gives
\[
f_{uv}
\xrightarrow{a.s.}
\mathbf{1}_{(u,v)\in E_0}
\qquad
(n\to\infty).
\]
Since $\theta\in(0,1)$ is fixed,
\[
\{f_{uv}\ge \theta\}
\xrightarrow{a.s.}
\{(u,v)\in E_0\},
\]
which implies,
\[
\mathcal{S}
\xrightarrow{a.s.}
E_0.
\]

\noindent
\textbf{(b)} This proof follows from Proposition \ref{prop:ovs}.

\noindent
\textbf{(c)} This proof follows from Theorem \ref{thm:ate}.
 
\noindent
\textbf{(d)} Define the sequence of events
\[
\mathcal{A}_n
=
\left\{
\hat{G}^{*}=G_0
\right\}.
\]
\noindent
By (a) and (b), the estimated stable causal graph converges to the true DAG, implying
\[
\mathbb{P}(\mathcal{A}_n)
\longrightarrow
1
\qquad
\text{as }
n \to \infty.
\]

On the event $\mathcal{A}_n$, Part~(c) establishes that the stability-weighted adjacency matrix converges almost surely to the population adjacency matrix of true average causal effect (ACE) magnitudes:
\[
\mathbf{A}^{\mathrm{stab}}
\xrightarrow{a.s.}
\mathbf{A}_0,
\]
where $\mathbf{A}_0$ denotes the true ACE-weighted adjacency matrix.
\noindent
The Chung Laplacian is obtained from
$\mathbf{A}^{\mathrm{stab}}$
through a sequence of continuous transformations. First, row-normalization together with the PageRank teleportation parameter $\alpha>0$ yields the transition matrix $P$.

\noindent
The corresponding stationary distribution
\[
\pi
=
(\pi_1,\ldots,\pi_p)^\top
\]
is defined as the unique left unit eigenvector satisfying
\[
\pi^\top P
=
\pi^\top,
\qquad
\sum_{i=1}^{p}\pi_i=1.
\]
\noindent
Because $\alpha>0$, the Markov chain induced by $P$ is irreducible and aperiodic, ensuring $\pi_i>0, i=1,\ldots,p.$

\noindent
The time-reversed transition matrix is then defined by,
\[
P^{*}_{ij}
=
\frac{\pi_j}{\pi_i}
P_{ji}.
\]
Since each $\pi_i$ is strictly positive, the mapping
\[
(P,\pi)
\mapsto
P^{*}
\]
is continuous.

\noindent
Finally, the Chung directed Laplacian is given by
\[
L
=
\mathbf{I}
-
\frac{1}{2}
\left(
P+P^{*}
\right),
\]
which is itself a continuous function of $P$ and $P^{*}$.

\noindent
Therefore, the entire construction
\[
\mathbf{A}^{\mathrm{stab}}
\longmapsto
P
\longmapsto
\pi
\longmapsto
P^{*}
\longmapsto
L
\]
is a composition of continuous maps. Since
\[
\mathbf{A}^{\mathrm{stab}}
\xrightarrow{a.s.}
\mathbf{A}_0,
\]
the Continuous Mapping Theorem \citep{cmt} implies that
\[
L
\xrightarrow{a.s.}
L_0,
\]
where $L_0$ denotes the Chung Laplacian associated with the true DAG $G_0$ and the true ACE-weighted adjacency matrix $\mathbf{A}_0$. Hence, the estimated Chung Laplacian converges almost surely to its population counterpart as the sample size tends to infinity.

\noindent
\textbf{(e)} Let,
\[
\delta_{K^{*}}
=
\lambda_{K^{*}+1}
-
\lambda_{K^{*}}
>
0
\]
denote the spectral gap of $L_0$.

\noindent
Define the perturbation matrix
\[
E
=
\hat L
-
L_0.
\]
For both OLS and DML,
\[
|\hat\theta-\theta|
=
\mathcal{O}(n^{-1/2}),
\]
where $\theta$ denotes the corresponding population ACE parameter.

\noindent
Therefore,
\[
\|{\mathbf A}^{\mathrm{stab}}
-
\mathbf A_0
\|_F
=
\mathcal{O}
\!\left(
\frac{|\mathbf V^{*}|}{\sqrt n}
\right).
\]
By continuity of the Laplacian construction,
\[
\|E\|_F
=
\|\hat L-L_0\|_F
=
\mathcal{O}
\!\left(
\frac{|\mathbf V^{*}|}{\sqrt n}
\right).
\]
Applying the Davis--Kahan \citep{davis1970rotation} theorem,
\[
\|
\sin\Theta
(
\mathbf V_{K^*},
\hat{\mathbf V}_{K^*}
)
\|_F
\le
\frac{\|E\|_F}{\delta_{K^{*}}}.
\]
Therefore, the embedding $\tilde{\mathbf X}=
\mathbf X^{*}\hat{\mathbf V}_{K^*}$ converges to the population embedding $\mathbf X^{*}\mathbf V_{K^*}$ up to an orthogonal rotation.

\noindent
Combining Parts~(a)--(e), the entire CaSPECT pipeline is almost surely consistent.
\end{proof}

\end{document}